\newtheorem{lemma}{Lemma}
\newtheorem{observation}{Observation}
\newtheorem{definition}{Definition}
\newtheorem{theorem}{Theorem}
\newcommand{\cw}{\mathsf{ctw}}
\newcommand{\domleq}{\preceq_{\mathrm{dom}}}
\newcommand{\cutvector}[2]{\mathsf{cuts}\langle #1,#2\rangle}
\newcommand{\fstsmt}[2]{#1_{\leq #2}}
\newcommand{\lstsmt}[2]{#1_{> #2}}
\newcommand{\onesmt}[2]{#1_{#2}}
\newcommand{\permut}[1]{\frak{S}({#1})}
\title{Strong immersion is a well-quasi-ordering for semi-complete digraphs\thanks{The research of F. Barbero and C. Paul is supported by the DE-MO-GRAPH project ANR-16-CE40-0028. 
The research of Mi. Pilipczuk is supported by Polish National Science Centre grant no. 2013/11/D/ST6/03073. Mi. Pilipczuk is also supported by the Foundation for Polish Science via the START stipend programme.}}
\author{Florian Barbero \thanks{LIRMM, Universit\'e de Montpellier, France, \texttt{florian.barbero@lirmm.fr}} 
   \and Christophe Paul \thanks{LIRMM, CNRS, Universit\'e de Montpellier, France, \texttt{christophe.paul@lirmm.fr}} 
   \and Micha\l{}' Pilipczuk \thanks{University of Warsaw, Poland, \texttt{michal.pilipczuk@mimuw.edu.pl}} }
\date{\today}
\begin{document}

\maketitle

\begin{abstract}
We prove that the strong immersion order is a well-quasi-ordering on the class of semi-complete digraphs, 
thereby strengthening a result of Chudnovsky and Seymour~\cite{CS11} that this holds for the class of tournaments.
\end{abstract}



\section{Introduction}

Understanding the combinatorics of inclusion relations of graphs is a central question in algorithmic graph theory. 
In their celebrated series of papers (see~\cite{RS04}), Robertson and Seymour proved Wagner's conjecture~\cite{Wag37} stating that undirected graphs are well-quasi-ordered under the \emph{minor} relation. 
In other words, every minor-closed graph property can be characterized by a finite set of excluded minors. 
Together with the cubic algorithm that tests whether a fixed graph $H$ is a minor of an input graph $G$~\cite{RS95c}, 
the graph minors theorem led to the development of a deep algorithmic graph structure theory, particularly useful in the context of parameterized complexity; cf.~\cite{CFK14,DF13,FG06}.
Notably, the work on the theory of graph minors established the notions of treewidth and tree decompositions, which are now key structural concepts in the (parameterized) algorithm design.

Graph inclusion relations alternative to the minor relation also attracted a lot of interest. Among them, we may consider the recent developments on \emph{(weak) immersions}:
a graph $H$ can be {\em{weakly immersed}} in a graph $G$ if there is a mapping $\mu$ from vertices of $H$ to pairwise different vertices of $G$ and from edges of $H$ to pairwise edge-disjoint paths in $G$
such that for an edge $uv$ of $G$, the endpoints of $\mu(uv)$ are $\mu(u)$ and $\mu(v)$.
It is known that finite undirected graphs are well-quasi-ordered under weak immersion~\cite{RS10}. 
Testing whether an input undirected graph $G$ contains a graph $H$ as a weak immersion is fixed-parameterized tractable when parameterized by the size of $H$~\cite{GKM11}.
Very recently, a structural parameter {\em{tree-cut width}} was introduced and studied as an appropriate width measure for immersions in undirected graphs.
It seems that tree-cut width possesses a number of good combinatorial and algorithmic properties similar to those that made treewidth so successful; cf.~\cite{GPT16,KPST15,Wol15}.

Developing a similar theory of inclusion relations in the context of directed graphs is a long-standing research challenge. 
What is the right notion of inclusion relation or what is the appropriate width parameter are questions that have not yet received a clear answer. 
For example, the recent directed grid theorem~\cite{KK14} in the context of directed treewidth relies on the notion of 
a \emph{butterfly minor}\footnote{We do not need here the exact definition of a butterly minor hence we refrain from giving it.}. 
However, general digraphs are not well-quasi-ordered under the butterfly minor relation, 
as the set of oriented cycles depicted in Figure~\ref{fig:antichain} forms an infinite antichain in the butterfly minor relation.

\begin{figure}
\centerline{\includegraphics[width=8cm]{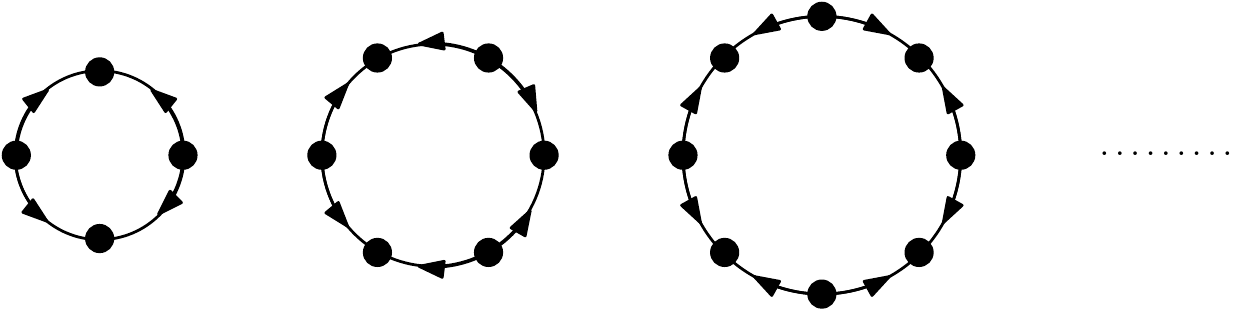}}
\caption{The set $\{C_{2k}\mid k=2,3,4,\ldots\}$, where $C_{2k}$ is a cycle on $2k$ vertices with arcs alternately oriented clockwise and counter-clockwise.\label{fig:antichain}}
\end{figure}

For this reason, special classes of digraphs were studied in the context of inclusion relations.
In this work we focus on the classes of {\em{tournaments}} and {\em{semi-complete digraphs}}.
A {\em{simple}} digraph is one where there are no loops nor multiple arcs with the same head and tail; however, we allow the presence of two arcs of the form $(u,v)$ and $(v,u)$ at the same time, which we call
{\em{symmetric arcs}}.
A simple digraph is {\em{semi-complete}} if for every pair of different vertices $u,v$, at least one of the arcs $(u,v)$ and $(v,u)$ is present in the digraph; it is a {\em{tournament}}, if exactly one
of them is present for every pair $u,v$. Thus, in tournaments we forbid symmetric arcs, while in semi-complete digraphs we allow them.

Kim and Seymour proved that the butterfly-minor relation is a well-quasi-ordering on the class semi-complete digraphs~\cite{Kim13}.
It is believed that this result cannot be generalized to natural larger classes of digraphs. 
Indeed, it is conjectured in~\cite{Kim13} that neither the class supertournaments nor the class simple digraphs with stability number at most two is well-quasi-ordered under the butterfly-minor relation.

For immersions, we may consider the (weak) immersion relation for directed graphs, defined similarly as the weak immersion relation for undirected graphs, but 
undirected paths are replaced with directed ones.
We may also consider the (strong) immersion relation where we additionally require that no path that is an image under $\mu$ of some arc $(u,v)$ traverses a vertex $\mu(w)$ for some $w\notin \{u,v\}$.
See Section~\ref{sec:prelim} for a formal definition.
The set depicted in Figure~\ref{fig:antichain} is again an infinite antichain for both the strong or the weak immersion relations in general digraphs.
However, as proved by Chudnovsky and Seymour~\cite{CS11}, strong immersion is a well-quasi-ordering on the class of tournaments.

\begin{theorem}[\cite{CS11}]\label{thm:wqoT}
Tournaments are well-quasi-ordered under strong immersions.
\end{theorem}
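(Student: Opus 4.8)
The plan is to organise tournaments by their \emph{cutwidth}. Given a tournament $T$ on $n$ vertices and a linear order $\sigma=(v_1,\dots,v_n)$ of $V(T)$, call an arc $(v_j,v_i)$ with $j>i$ \emph{backward}, say that it \emph{crosses} the cut between positions $p$ and $p+1$ when $i\le p<j$, and let the width of $\sigma$ be the largest number of backward arcs crossing a single cut; then $\cw(T)$ is the minimum width over all orders $\sigma$. The one basic property I would record is that $\cw$ cannot increase under strong immersion: pull back an optimal linear order of the host to the branch vertices and use that the routing paths are pairwise edge-disjoint, so no cut of the host is forced to carry more backward arcs than it has. The substance of the proof is carried by two lemmas, which I would prove separately.

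\textbf{Lemma A (bounded cutwidth).} For every integer $c$, the tournaments of cutwidth at most $c$ are well-quasi-ordered by strong immersion.
\textbf{Lemma B (large cutwidth is universal).} For every integer $k$ there is an integer $w(k)$ such that every tournament of cutwidth at least $w(k)$ contains, as a strong immersion, \emph{every} tournament on at most $k$ vertices.
In spirit these are the structure theorems on cutwidth and immersions of tournaments developed by Chudnovsky, Fradkin, and Seymour.

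Granting the two lemmas, the theorem follows quickly. A quasi-order is a \wqo\ precisely when it has no infinite \emph{bad} sequence, i.e.\ no sequence $T_1,T_2,\dots$ with $T_i\not\preceq T_j$ for all $i<j$; so assume for contradiction that such a sequence exists. If $\sup_i\cw(T_i)=c<\infty$, then all $T_i$ lie in the single \wqo\ class of Lemma A, so $T_i\preceq T_j$ for some $i<j$, a contradiction. Otherwise $\sup_i\cw(T_i)=\infty$. Put $k=|V(T_1)|$ and observe that $\{\,i:\cw(T_i)\ge w(k)\,\}$ is infinite, for if it were finite the whole sequence would have bounded cutwidth; pick in it some index $i\ge 2$. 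By Lemma B, $T_i$ strongly immerses every tournament on at most $k$ vertices, in particular $T_1$; hence $T_1\preceq T_i$ with $1<i$, again a contradiction. So no bad sequence exists, and strong immersion well-quasi-orders the tournaments.

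The main obstacle is Lemma B. I would prove it by fixing a linear order of $T$ realising $\cw(T)=w(k)$ together with a cut crossed by $w(k)$ backward arcs --- which already forces $T$ to have $\Omega(\sqrt{w(k)})$ vertices --- and then, through connectivity/flow arguments in the spirit of those mentioned above, extracting a large family of pairwise internally vertex-disjoint directed paths running from the right side of the cut to the left side, while Ramsey-type arguments supply long transitive blocks on each side. Using this family of disjoint paths as a ``switchboard'' together with the transitive blocks, one routes the at most $\binom{k}{2}$ arcs of any prescribed $k$-vertex tournament along pairwise edge-disjoint directed paths avoiding the chosen branch vertices, which is exactly a strong immersion of that tournament. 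The delicate part --- where essentially all the work lies --- is the bookkeeping that makes the quantitative bounds fit together: choosing the disjoint paths and the transitive parts coherently and large enough as a function of $k$ alone. Lemma A is comparatively routine: a tournament of cutwidth at most $c$ is encoded as a word over a finite alphabet of ``boundaried'' local pieces, each recording the at most $c$ backward arcs currently crossing the cut placed in $c$ labelled registers, and a Higman/Kruskal-type subsequence argument on these words yields, from any infinite family, two members related by an embedding that upgrades directly to a strong immersion.
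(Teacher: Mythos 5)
Your high-level decomposition is the same one the paper uses (and the same one in \cite{CS11}): your Lemma~A is exactly the paper's Lemma~\ref{lem:wqo} restricted to tournaments, and your Lemma~B is a reformulation of the crucial direction of Lemma~\ref{lem:exclusion} (Chudnovsky--Fradkin--Seymour) after observing that there are only finitely many tournaments on at most $k$ vertices. The deduction from the two lemmas is also the same argument the paper gives after Lemma~\ref{lem:wqo}, just phrased contrapositively, and your preliminary observation that strong immersion cannot increase cutwidth is correct and proved as you indicate.

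The one point worth flagging is your characterization of Lemma~A as ``comparatively routine,'' with a plain Higman/Kruskal subsequence argument that ``upgrades directly to a strong immersion.'' That step is precisely where the hard work of \cite{CS11} lives, and it is not a direct upgrade. A Higman-style embedding of one codeword into another gives you matched positions and local agreement of cut profiles, but it does not, on its own, let you route the arcs of the small tournament as arc-disjoint directed paths in the large one. Two extra ingredients are essential and are not implicit in ``boundaried pieces in $c$ registers'': first, the orderings and ordered cuts must be chosen to be \emph{linked} (Lemmas~\ref{lem:lvo} and~\ref{lem:lco}), which is what supplies the $t$ arc-disjoint paths between equal-sized cuts that you need to actually realize the immersion; and second, the WQO you invoke must be the Kriz--Simpson \emph{gap-condition} variant (Lemma~\ref{lem:WQO}), not ordinary Higman, so that all intermediate cuts between matched positions are guaranteed to be at least as large as the embedded cut. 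Without both, the claimed ``upgrade'' fails, and feedback arcs of the small tournament have nowhere to go. Your sketch of Lemma~B is also looser than the real argument in CFS12 (edge-disjointness of backward arcs across a cut does not immediately give many internally vertex-disjoint paths), but since you are importing that lemma rather than reproving it, this is a smaller concern. So the plan is sound and matches the paper's route, but the attribution of where the difficulty sits is inverted: Lemma~A is the technical heart, not the routine part.
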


The proof of Theorem~\ref{thm:wqoT} of Chudnovsky and Seymour~\cite{CS11} actually does not work directly in the more general setting of semi-complete digraphs, 
despite this statement being circulated in the literature~\cite{KimS15,Pil13}. More precisely, there is a technical issue in one of the considered cases, 
where it is crucially used that the digraphs in question do not contain symmetric arcs. 
In this work we fill this gap by generalizing Theorem~\ref{thm:wqoT} to semi-complete digraphs.

\begin{theorem}\label{thm:wqoSC}
Semi-complete digraphs are well-quasi-ordered under strong immersions.
\end{theorem}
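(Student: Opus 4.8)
The plan is to follow the overall strategy of Chudnovsky and Seymour for Theorem~\ref{thm:wqoT}, but to redesign the step that breaks when symmetric arcs are allowed. The backbone is a dichotomy applied to a hypothetical \emph{bad sequence} $D_1,D_2,D_3,\dots$ of semi-complete digraphs — i.e.\ one admitting no $i<j$ with $D_i$ strongly immersed in $D_j$ — according to whether the cutwidth $\cw(D_i)$ stays bounded or tends to infinity along the sequence. If only finitely many $D_i$ have $\cw(D_i)\le k$ for each fixed $k$, then $\cw(D_i)\to\infty$; otherwise some fixed $k$ bounds $\cw(D_i)$ on an infinite subsequence (still a bad sequence), to which I would pass.

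\textbf{Bounded cutwidth.} Here I would recall that a semi-complete digraph of cutwidth at most $k$ has a linear order of its vertices in which every \emph{cut} — the split into a prefix and a suffix — is crossed by a set of arcs whose "interface'' has bounded size. Laying out each $D_i$ in such an order, I record for every cut a bounded-size \emph{cut type}: not merely how many arcs cross, but the full isomorphism type of the crossing interface — which crossing endpoints are joined by an arc, in which direction, and crucially \emph{whether that arc is symmetric}. This turns each $D_i$ into a finite word $w_i$ over a finite alphabet (all width-$k$ cut types). By Higman's Lemma the words $(w_i)$ are well-quasi-ordered under the subword/domination order $\domleq$, so some $i<j$ satisfy $w_i\domleq w_j$ via an embedding $\iota$ of positions. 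The heart of this case is to promote $\iota$ to a strong immersion $D_i\hookrightarrow D_j$: send the vertices of $D_i$ to the matched positions in $D_j$, and route each arc of $D_i$ through the "gap'' of $D_j$ between consecutive matched positions, following crossing arcs as dictated by the recorded cut types. The new ingredient, repairing the gap in~\cite{CS11}, is that a symmetric arc $(u,v),(v,u)$ of $D_i$ forces us to route \emph{two} edge-disjoint directed paths (one each way) avoiding all other branch vertices; recording symmetry in the cut type is precisely what guarantees, inside each gap of $D_j$, a symmetric crossing arc along which the return path can be pushed without meeting a branch vertex.

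\textbf{Unbounded cutwidth.} Otherwise $\cw(D_i)\to\infty$, and here symmetric arcs only help. I would invoke an obstruction theorem for cutwidth in semi-complete digraphs: there is a function $g$ such that any semi-complete digraph of cutwidth at least $g(t)$ contains, as a strong immersion, \emph{every} semi-complete digraph on at most $t$ vertices (extracting, e.g., a large "jungle''-type structure and checking it absorbs all fixed small semi-complete digraphs). Applying this with $t=|V(D_1)|$ to any $D_j$ with $\cw(D_j)\ge g(t)$ yields $D_1\hookrightarrow D_j$, contradicting badness of the sequence.

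\textbf{Conclusion and main obstacle.} Both branches contradict the existence of a bad sequence, so strong immersion is a well-quasi-ordering on semi-complete digraphs, proving Theorem~\ref{thm:wqoSC}. I expect essentially all the difficulty to sit in the bounded-cutwidth case: making the cut-type encoding rich enough that the subword embedding $\iota$ genuinely produces a \emph{strong} immersion — branch-vertex-avoiding, pairwise edge-disjoint paths for every arc, both directions of every symmetric arc included — while keeping the alphabet finite so Higman's Lemma still applies. Coordinating the routing of all arcs simultaneously across the gaps, in particular maintaining edge-disjointness, is the delicate part; the unbounded-cutwidth branch and the reduction to a bad sequence are comparatively routine.
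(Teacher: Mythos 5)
Your overall strategy matches the paper's closely: reduce to bounded cutwidth via the Chudnovsky--Fradkin--Seymour obstruction result, encode a width-$c$ layout as a word over a finite alphabet enriched with data on which feedback arcs are symmetric, apply a labelled variant of Higman's Lemma, and rebuild a strong immersion from the word embedding. Your unbounded-cutwidth branch, phrased as an obstruction theorem with a function $g$, is a valid repackaging of the same lemma the paper cites.

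However, the step you flag as delicate is precisely where your sketch has a genuine gap, and the claim that ``recording symmetry in the cut type is precisely what guarantees, inside each gap of $D_j$, a symmetric crossing arc along which the return path can be pushed'' does not hold as stated. Consider a symmetric pair in $D_i$ whose reversal you must route, say from $\mu(\pi_j)$ to $\mu(\pi_h)$ with $j<h$, and look at how much the word embedding $f$ stretches the interval $[j,h]$. If $f(h)-f(j)=h-j$, the symmetry-enriched cut types do force the two intervals to be isomorphic, so the arc $(\pi'_{f(j)},\pi'_{f(h)})$ exists in $D_j$ and is free --- that case you have right. But if the stretch is small and positive (say by one), the intervals need not be isomorphic, there need not be any usable forward arc or symmetric arc inside the gap, and arc-disjointness with the tournament immersion and with the other return paths is unprotected. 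The paper's fix is an extra alphabet component recording each index modulo $4c+1$. This forces every gap to either be preserved exactly or be stretched by at least $4c+1$, and in the stretched case it provides at least $2c+1$ ``free pivots'' --- gap vertices of $D_j$ that are not images of branch vertices and that carry the forward arcs $(\pi'_{f(j)},\pi'_i)$ and $(\pi'_i,\pi'_{f(h)})$ --- so each return arc can be routed as a length-$2$ path through an unused pivot, with a greedy count (at most $c$ conflicts on each endpoint) guaranteeing pairwise disjointness. Without some device of this kind the encoding is too coarse for a Higman embedding to translate into a strong immersion, so this is a missing idea rather than a routine detail.
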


As for every digraph $H$ there is a cubic time algorithm that decides whether there is a strong or weak immersion of $H$ in an input semi-complete digraph $D$~\cite{CFS12},
Theorem~\ref{thm:wqoT} has a number of meta-algorithmic consequences, for instance a cubic algorithm for the recognition of any fixed immersion-closed class of tournaments.
Theorem~\ref{thm:wqoSC} allows us to extend these corollaries to semi-complete digraphs.

Let us briefly explain our approach to the proof of Theorem~\ref{thm:wqoSC}.
In a nutshell, we follow closely the approach of Chudnovsky and Seymour~\cite{CS11} and we ``patch'' the crucial step in the proof where the assumption about the non-existence of symmetric arcs is used.
This patch is not straightforward and requires some new combinatorial ideas.

The crux of the proof of Chudnovsky and Seymour~\cite{CS11} is to use a structural parameter {\em{cutwidth}} (we define it formally in Section~\ref{sec:prelim}), 
which is bound to strong immersions via the following result of Chudnovsky, Fradkin, and Seymour~\cite{CFS12}.

\begin{lemma}[\cite{CFS12}]\label{lem:exclusion}
Let $\mathcal{F}$ be a family of semi-complete digraphs. Then the following conditions are equivalent.
\begin{itemize}
\item There exists a positive integer $c$ such that every member of $\mathcal{F}$ has cutwidth at most $c$.
\item There exists a digraph $H$ such that $H$ cannot be strongly immersed in any member of $\mathcal{F}$.
\end{itemize}
\end{lemma}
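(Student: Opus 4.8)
This is an ``iff'', and the two implications require very different amounts of work.

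\emph{Easy direction (bounded cutwidth implies an excluded immersion).} The point is that cutwidth is monotone under strong immersion: if a digraph $H$ strongly immerses into a semi-complete digraph $D$, then $\cw(H)\le\cw(D)$. Indeed, fix a vertex ordering $\sigma$ of $D$ witnessing $\cw(D)$ and an immersion model $\mu$ of $H$ in $D$, and let $\tau$ be the ordering of $V(H)$ induced from $\sigma$ through $\mu$. For every prefix cut of $\tau$, extend it to the prefix cut of $\sigma$ ending exactly at the position of the corresponding last branch vertex; every backward arc of $H$ across the $\tau$-cut is realised by $\mu$ as a directed path from the right side to the left side of the $\sigma$-cut, which therefore traverses at least one backward arc of $D$ across the $\sigma$-cut, and since the model paths are edge-disjoint these chosen arcs are distinct. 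Hence every $\tau$-cut is crossed by at most $\cw(D,\sigma)=\cw(D)$ backward arcs, so $\cw(H)\le\cw(D)$. Consequently, if every member of $\mathcal F$ has cutwidth at most $c$, then any fixed digraph $H$ with $\cw(H)>c$ --- for instance a sufficiently large bidirected clique (the semi-complete digraph in which every pair of vertices is joined by arcs in both directions), whose cutwidth is quadratic in the number of vertices --- cannot be strongly immersed into any member of $\mathcal F$.

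\emph{Hard direction (an excluded immersion implies bounded cutwidth).} Contrapositively, and this is the substantial part: if the members of $\mathcal F$ have unbounded cutwidth, then every digraph $H$ strongly immerses into some member. First I pass to a universal target. Let $h=|V(H)|$, let $m$ be the maximum arc multiplicity of $H$, and let $K$ be the complete digraph on $h$ vertices carrying $m$ parallel copies of each arc in each direction. Then $H$ strongly immerses into $K$ (map $V(H)$ injectively into $V(K)$ and send each arc of $H$ to its own private copy of the corresponding arc of $K$), and strong immersion is transitive, so it suffices to produce a function $g$ such that every semi-complete digraph $D$ with $\cw(D)\ge g(h,m)$ strongly immerses $K$.

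The core is a structural (duality) statement for cutwidth: a semi-complete digraph of large enough cutwidth contains a large ``backward grid''. Precisely, for a parameter $\ell$ chosen large in terms of $h$ and $m$, I want vertices $t_1,\dots,t_\ell$ together with pairwise edge-disjoint directed paths $\{P_{i,j}:1\le i<j\le\ell\}$ such that $P_{i,j}$ runs from a private out-neighbour of $t_j$ back to a private in-neighbour of $t_i$ and is internally disjoint from $\{t_1,\dots,t_\ell\}$. Extracting this is the main obstacle --- it is the analogue here of an excluded-grid theorem. The plan is: (i) since $\cw(D)$ is large, \emph{every} ordering of $D$ has a prefix cut crossed by many backward arcs, and iterating this while peeling off the vertices already used --- or working with a cutwidth-minimal ordering together with a Ramsey-type clean-up --- yields many backward arcs arranged in a controlled nested/crossing pattern along one common ordering; (ii) the density of semi-complete digraphs (between any two vertices there is an arc) then lets one weave these backward arcs into the required grid of edge-disjoint backward paths, sacrificing only a bounded fraction of the material at each step to maintain disjointness. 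Once the backward grid is in hand, the strong immersion of $K$ is routed as follows: choose $h$ well-spaced vertices among $t_1,\dots,t_\ell$ as the branch vertices $b_1,\dots,b_h$ in order; the forward arcs $b_p\to b_q$ with $p<q$, with all their multiplicities, are routed directly when possible and otherwise through short detours via the many still-unused $t$'s lying strictly between $b_p$ and $b_q$, which keeps them branch-avoiding and edge-disjoint; the backward arcs $b_q\to b_p$ with $p<q$ are routed through the disjoint grid paths $P_{i,j}$, prepending and appending short connections to the branch vertices and using $m$ distinct grid paths per arc of $K$. Taking $\ell$, hence $g(h,m)$, large enough for all these choices to be made disjointly finishes the proof; I expect steps (i)--(ii), namely pinning down the right robust obstruction to small cutwidth and extracting the grid, to be by far the most delicate part, exactly as in~\cite{CFS12}.
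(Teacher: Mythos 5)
The paper does not prove this statement at all: Lemma~\ref{lem:exclusion} is quoted verbatim from Chudnovsky, Fradkin and Seymour~\cite{CFS12}, and the authors rely on it as a black box. So the comparison is between your attempt and the proof in~\cite{CFS12}, not anything in this paper. Your first direction is complete and correct: the monotonicity of cutwidth under strong immersion (push each prefix cut of the induced ordering of $H$ to the prefix cut of the host ordering ending at the image of the last branch vertex, and charge each feedback arc of $H$ to a distinct feedback arc of $D$ used by its arc-disjoint model path), combined with the fact that bidirected cliques have cutwidth quadratic in their order, gives the excluded digraph $H$. This is the standard and easy half.

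The second direction, however, is only a plan, and the one step you defer is the entire theorem. Reducing to immersing a single universal target $K$ and invoking transitivity is fine, but the claim that every semi-complete digraph of sufficiently large cutwidth contains your ``backward grid'' (and hence $K$) is exactly the main result of~\cite{CFS12}; the sketch ``every ordering has a large cut, iterate while peeling off used vertices, apply a Ramsey-type clean-up, then weave the backward arcs into edge-disjoint backward paths using semi-completeness'' does not establish it. In particular, nothing in the sketch explains how to keep the extracted backward arcs reusable across \emph{one} common ordering while guaranteeing arc-disjointness of the $\Theta(\ell^2)$ paths $P_{i,j}$, nor how the ``bounded fraction sacrificed at each step'' is controlled --- this is where~\cite{CFS12} needs genuine work (their argument goes through a dedicated structural analysis of orderings of semi-complete digraphs with a huge cut, not through a generic iteration). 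As written, the hard implication is therefore a gap: either the argument for the threshold function $g(h,m)$ must be carried out in detail, or one should simply cite~\cite{CFS12}, which is what the paper does.
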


Note that Lemma~\ref{lem:exclusion} works in the semi-complete setting.
By Lemma~\ref{lem:exclusion}, proving Theorem~\ref{thm:wqoSC} boils down to the following statement, as we explain next.

\begin{lemma}\label{lem:wqo}
For every nonnegative integer $c$, the strong immersion relation is a well-quasi-ordering on semi-complete digraphs of cutwidth at most $c$.
\end{lemma}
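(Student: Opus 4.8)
The plan is to follow the strategy that Chudnovsky and Seymour used for tournaments~\cite{CS11}: bounded cutwidth allows us to encode each semi-complete digraph, equipped with a fixed optimal linear layout, as a word over a finite alphabet depending only on $c$, and then to argue that a suitable quasi-order on these words is a well-quasi-ordering (by Higman's lemma) while being contained in the strong immersion order.

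For each semi-complete digraph $D$ with $\cw(D)\le c$, fix a linear ordering $x_1,\dots,x_n$ of $V(D)$ of width at most $c$ and scan it from left to right. At cut $i$ maintain the \emph{state}: the list, ordered by the time of opening, of the backward arcs crossing cut $i$ (arcs $(x_b,x_a)$ with $a\le i<b$); since there are at most $c$ of these, the state lives in a bounded set. Processing $x_i$ is a \emph{transition} that closes the backward arcs with tail $x_i$ and opens those with head $x_i$; we record the positions in the current state of the closed arcs, for each closed arc $(x_i,x_a)$ one bit saying whether the symmetric arc $(x_a,x_i)$ is present as well, and the number of arcs opened. By semi-completeness every other pair $\{x_a,x_i\}$ with $a<i$ carries exactly the forward arc $(x_a,x_i)$, so nothing more is needed. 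Each transition is then a letter of a finite alphabet $\Sigma_c$, and $D$ with its layout becomes a word $w(D)\in\Sigma_c^{\ast}$ from which $D$ is recovered up to isomorphism. On valid words (every opened arc eventually closed, counts never exceeding $c$) define $w(D)\domleq w(D')$ to hold when there is an increasing injection of positions that matches the letters and, in addition, respects the backward-arc bookkeeping --- a backward arc of $D'$ matched to a backward arc of $D$ must be opened and closed at matched positions, hence is not closed strictly between two consecutive images; this is conveniently phrased through the cut objects $\cutvector{D}{i}$ and the domination relation. This $\domleq$ is a well-quasi-ordering: it follows from Higman's lemma applied to $\Sigma_c^{\ast}$ after folding the bounded state into the letters, with a short minimal-bad-sequence argument to enforce the bookkeeping constraints. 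Hence any infinite family of semi-complete digraphs of cutwidth at most $c$ contains two members $D,D'$ with $w(D)\domleq w(D')$.

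It remains to turn $w(D)\domleq w(D')$, witnessed by an increasing injection $\phi$, into a strong immersion of $D$ into $D'$: declare $\mu(x_k)=y_{\phi(k)}$, where $y_1,\dots,y_m$ is the fixed layout of $D'$. Each backward arc of $D$ is realised by the single matched backward arc of $D'$, whose range of cuts, by the bookkeeping, contains no other selected vertex. Each forward arc $(x_a,x_b)$ of $D$ should be realised by a directed path that runs forward from $y_{\phi(a)}$ to $y_{\phi(b)}$ using only skipped (unselected) vertices; when $b=a+1$ this amounts to a forward path through one block of skipped vertices and is not problematic. Edge-disjointness across different arcs is automatic, since distinct arcs use disjoint blocks and disjoint matched backward arcs.

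The step I expect to be the main obstacle --- and it is precisely where the tournament proof breaks and must be patched --- is guaranteeing that the forward-routing paths exist and avoid the other selected vertices. Symmetric arcs are the source of trouble: for a forward arc $(x_a,x_b)$ of $D$ with $b>a+1$, the arc $(y_{\phi(a)},y_{\phi(b)})$ may be absent from $D'$, which happens exactly when $D'$ contains a backward arc $(y_{\phi(b)},y_{\phi(a)})$ that is \emph{not} accompanied by its symmetric forward arc and is not matched by $\phi$ to anything in $D$ (had it been matched, $D$ would contain the same symmetric-free backward arc and $(x_a,x_b)$ would not be an arc of $D$). Such a blocking backward arc of $D'$ crosses all the cuts between $\phi(a)$ and $\phi(b)$, and a strong immersion may not reroute $(x_a,x_b)$ through the selected vertices $y_{\phi(a+1)},\dots,y_{\phi(b-1)}$. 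In a tournament this cannot arise, because there a backward arc forbids the corresponding forward arc, which is why~\cite{CS11} needs no modification. The patch has to enrich the encoding with a bounded amount of extra data describing how unmatched symmetric-free backward arcs of the larger digraph can straddle the matched structure --- without destroying the well-quasi-ordering property --- and then use that data to route each endangered forward arc of $D$ along a short detour through skipped vertices that is guaranteed to exist whenever a blocking arc appears. Designing that enrichment, and re-proving the well-quasi-ordering for the enriched order, is the crux; the remainder is the Chudnovsky--Seymour argument transferred to the semi-complete setting.
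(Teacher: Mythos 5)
Your proposal correctly sets up the framework of the paper --- encode the semi-complete digraph plus a bounded-width layout as a word over a finite alphabet, apply a Higman-type lemma to obtain a domination between two such words, and reconstruct a strong immersion from the domination --- and it correctly identifies symmetric arcs as the source of the difficulty. However, at the decisive moment you write that ``designing that enrichment, and re-proving the well-quasi-ordering for the enriched order, is the crux'' and stop; so the proposal has a genuine gap precisely at the point where the paper makes its contribution. In addition, the casual claim that ``edge-disjointness across different arcs is automatic'' is exactly what fails in the semi-complete setting and is the thing one must prove.

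The paper's solution has two ingredients not present in your sketch. First, the codeword records, besides the cut profile and a symmetric-arc bit, the vertex index modulo $4c+1$. This is the enrichment you ask for: whenever the embedding $f$ of indices is not tight between two positions $j<h$ (that is, $f(h)-f(j)>h-j$), the congruence forces $f(h)-f(j)\ge h-j+(4c+1)$, so there are at least $4c+1$ unmatched vertices of $S'$ strictly between $\pi'_{f(j)}$ and $\pi'_{f(h)}$. Second, the routing of the extra forward arcs coming from symmetric pairs is done greedily via ``free pivots'': among those $4c+1$ unmatched vertices, at most $c$ can be tails of feedback arcs into $\pi'_{f(j)}$ and at most $c$ can be heads of feedback arcs out of $\pi'_{f(h)}$ (by the cutwidth bound), so at least $2c+1$ vertices $\pi'_i$ have both arcs $(\pi'_{f(j)},\pi'_i)$ and $(\pi'_i,\pi'_{f(h)})$ present; and at most $2c$ of these length-2 paths can have been consumed by earlier choices (again by the cutwidth bound, at most $c$ on each side). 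This greedy argument is what delivers arc-disjointness; it is not automatic.

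You also somewhat misplace where the difficulty lies. The paper first builds a tournament $T=(V,E_1)$ from $S$ by keeping, for each symmetric pair, only the backward arc, and then invokes Chudnovsky--Seymour's lemma (which works verbatim for tournaments embedded into a semi-complete host) to immerse $T$ into $S'$, with non-feedback arcs mapped to single arcs. Your worry about a non-symmetric forward arc $(x_a,x_b)$ of $D$ whose direct image arc is absent because of an unmatched symmetric-free backward arc of $D'$ is already excluded by the cut-profile domination inherited from that lemma. The genuinely new case is the set $E_2$ of forward arcs of symmetric pairs: there one must produce a second, arc-disjoint directed path from $\mu(\pi_j)$ to $\mu(\pi_h)$ in addition to the path already used for the backward arc, and it is for this that the modular counter and the free-pivot greedy are indispensable.
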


We now repeat the argument of Chudnovsky and Seymour~\cite{CS11} in order to show how Theorem~\ref{thm:wqoSC} can be obtained by combining Lemmas~\ref{lem:exclusion} and~\ref{lem:wqo}.
Take any infinite sequence $S_1,S_2,S_3,\ldots$ of semi-complete digraphs. 
It suffices to prove that there are some $1\leq i<j$ such that $S_i$ can be strongly immersed in $S_j$.
If $S_1$ can be strongly immersed in any $S_j$ for $j\geq 2$, then we are done, hence assume otherwise.
By Lemma~\ref{lem:exclusion}, there exists $c$ such that each of the semi-complete digraphs $S_2,S_3,S_4,\ldots$ has cutwidth at most $c$.
By Lemma~\ref{lem:wqo}, there are some $2\leq i<j$ such that $S_i$ can be strongly immersed in $S_j$, and we are done.

The proof of the counterpart of Lemma~\ref{lem:wqo} in~\cite{CS11} is essentially done by encoding a small-width layout of a tournament in a word over an alphabet of size dependent in $c$ in such a way
that a Higman-like embedding of words encoding two tournaments implies the existence of a strong immersion from one to the other.
The place where the non-existence of symmetric arcs is used in~\cite{CS11} lies in the proof of this implication.
In order to prove the more general statement of Lemma~\ref{lem:wqo}, we enrich the encoding of a small-width layout of a semi-complete digraph
by including also information about symmetric arcs. This allows us to find appropriately 
embedded paths for them as well. The technical details of this step rely on a good understanding of the proof of~\cite{CS11}, so we defer further explanation to Section~\ref{sec:proof}.




\section{Preliminaries}
\label{sec:prelim}

\paragraph{Basic definitions and notation.}
A relation $\preceq$ over a set $S$ is a \emph{quasi-ordering} if it is transitive and reflexive. 
We say that $\preceq$ is a {\em{well-quasi-ordering}} ({\sf WQO} for short)
if for every infinite sequence $x_1,x_2,\ldots$ of elements of $S$ there exist $1\leq i< j$ such that $x_i\preceq x_j$.
It is well-known that this is equivalent to saying that for every subset $T$ of $S$ closed under $\preceq$, there is a finite set $F\subseteq S$
such that an element $x$ of $S$ belongs to $T$ if and only if $f\not\preceq x$ for each $f\in F$.

We use standard graph notation for directed graph (digraph). For a digraph $D$, the vertex and arc sets of $D$ are denoted by $V(D)$ and $E(D)$, respectively. 
For an arc $(u,v)$ of a digraph $D$, vertex $u$ is called the tail and vertex $v$ the head. 
Given a bipartition $(A,B)$ of the vertex set $V(D)$, we define the \emph{cut} $E(A,B)$ as the subset of arcs $\{(u,v)\mid u\in A, v\in B\}$.

For two integers $p \leq p'$, let $[p,p'] \subseteq \mathbb{Z}$ be the set of integers between $p$ and $p'$. If $p < p'$, we set $[p',p] = \emptyset$ by convention. 
For a finite set $S$, by $\permut{S}$ we denote the set of permutations of the elements of $S$. A permutation $\sigma\in\permut{S}$ is seen as a bijective mapping $\sigma\colon S\rightarrow [1,|S|]$. An element $u \in S$ is \emph{at position $i$ in $\sigma$} if $\sigma(u) = i$, and we denote this unique element by $\onesmt{\sigma}{i}$. The {\em{prefix}} of length $i$ of $\sigma$ is the set $\fstsmt{\sigma}{i}= \{\onesmt{\sigma}{j} \colon j \in [1,i]\}$; we set $\fstsmt{\sigma}{i} = \emptyset$ when $i \leq 0$, and $\fstsmt{\sigma}{i} = S$ when $|S| \leq i$. We extend this notation to prefixes and suffixes of orderings naturally, e.g., $\lstsmt{\sigma}{i} = S\setminus \fstsmt{\sigma}{i}$ is the set of the last $n-i$ vertices in $\sigma$.

Let $D$ be a digraph. A permutation $\pi\in\permut{V(D)}$ is called a \emph{vertex ordering}. An arc $(\onesmt{\pi}{i},\onesmt{\pi}{j}) \in E(D)$ is a \emph{feedback arc for the vertex ordering $\pi$} if $i > j$, that is, if $\onesmt{\pi}{i}$ is after $\onesmt{\pi}{j}$ in $\pi$. The sequence of \emph{cuts} of $\pi$ is defined as $(E^0_\pi,\dots,E^n_\pi)$ where for $i\in[0,n]$, $E^i_\pi=E(\lstsmt{\pi}{i},\fstsmt{\pi}{i})$. Hereafter, a permutation $\varepsilon_{\pi}^i\in \permut{E_{\pi}^i}$ will be called an \emph{ordered cut} of $\pi$. Finally, we set $\cutvector{D}{\pi} = (|E^0_\pi|, |E^1_\pi|, \ldots, |E^n_\pi|)$, which can be interpreted the function such that $\cutvector{D}{\pi}(i)=|E^i_\pi|$.

\begin{definition}
Let $\pi$ be a vertex ordering of a digraph $D$. The {\em width} of $\pi$ is
$$\cw(D,\pi)=\max\{\cutvector{D}{\pi}\}$$
where $\max$ on a tuple yields the largest coordinate. 
The {\em{cutwidth}} of $D$ is
$$\cw(D)=\min\{\cw(D,\pi)\mid\mbox{$\pi$ is a vertex ordering of $D$}\}.$$
A vertex ordering $\pi$ of $D$ is {\em{$\cw$-optimal}} if $\cw(D,\pi)=\cw(D)$.
\end{definition}


\begin{definition}
Let $D$ and $H$ be two digraphs. A \emph{strong immersion} of $H$ in $D$ is a mapping $\mu$ such that:
\begin{enumerate}
\item $\mu$ maps $V(H)$ injectively to $V(D)$;
\item for every $(u,v)\in E(H)$, $\mu((u,v))$ is a directed path from $\mu(u)$ to $\mu(v)$ in $D$;
\item for every pair of distinct arcs $e,f\in E(H)$, the directed paths $\mu(e)$ and $\mu(f)$ are arc-disjoint;
\item for every arc $e\in E(H)$ and every vertex $v\in V(H)$ not incident to $e$, the vertex $\mu(v)$ does not lie on the directed path $\mu(e)$.
\end{enumerate}
\end{definition}

\begin{figure}
\centerline{\includegraphics[width=8cm]{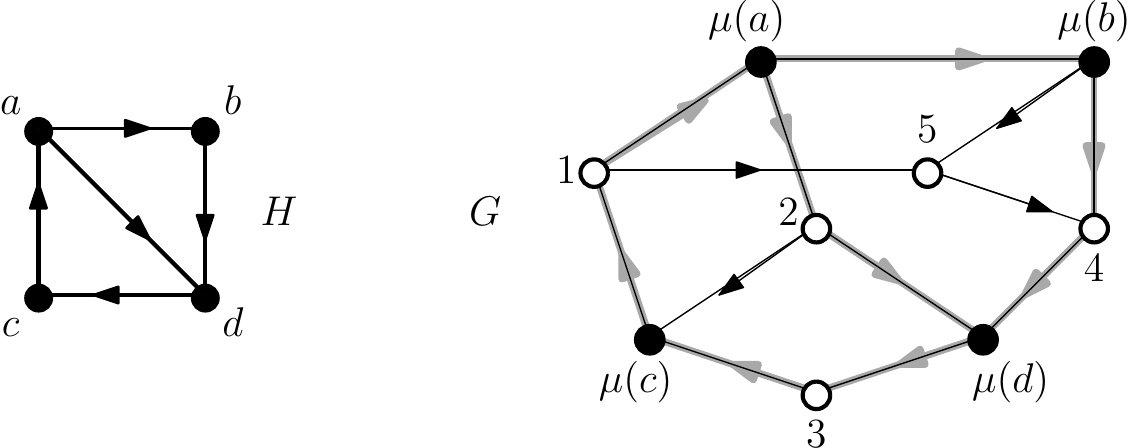}}
\caption{An immersion $\mu$ of the graph $H$ in $G$. The embedding of arcs of $H$ into paths of $G$ is depicted in grey lines. We have that $\mu((a,b))=[\mu(a),\mu(b)]$, $\mu((b,d))=[\mu(b),4,\mu(d)]$, $\mu((d,c))=[\mu(d),3,\mu(c)]$, $\mu((c,a))=[\mu(c),1,\mu(a)]$ and $\mu((a,d))=[\mu(a),2,\mu(d)]$}
\end{figure}

\paragraph{Linked vertex ordering and linked sequence of ordered cuts.}
We recall the definitions of the main tools used in~\cite{CS11} for the proof of the counterpart of Lemma~\ref{lem:wqo} for tournaments. 

\begin{definition}
Let $D$ be a digraph of order $n$. A vertex ordering $\pi$ of the vertex set $V(D)$ is a \emph{linked vertex ordering}\footnote{In~\cite{CS11}, the authors used the terminology \emph{linked enumeration}.} 
if for every $i,j\in [n]$ with $i<j$ such that $|E_\pi^i| = |E_\pi^j|=t$, the following holds:
\begin{itemize}
\item either there exists $h \in [i,j]$ such that $|E_\pi^h|< t$, or
\item there exist $t$ arc-disjoint paths from $\pi_{> j}$ to $\pi_{\leq i}$.
\end{itemize}
\end{definition}



The definition of linked vertex ordering  is extended to sequence of ordered cuts as follows.

\begin{definition}
Let $(E^0_\pi,\dots,E^n_\pi)$ be the sequence of cuts of  a linked vertex ordering $\pi$ of a digraph $D$. Then a sequence $(\varepsilon_{\pi}^0,\ldots,\varepsilon_{\pi}^n)$ of ordered cuts of $\pi$, with $\varepsilon_{\pi}^i\in\permut{E^i_\pi}$ for $i\in[0,n]$,  is \emph{linked} if for every $i,j$ with $0\leq i<j\leq n$ such that $|E_\pi^i| = |E_\pi^j|=t$, we have that:
\begin{itemize}
\item either there exists $h \in [i,j]$ such that $|E_\pi^h|< t$, or
\item there exist $t$ arc-disjoint paths $P_1,\dots, P_t$ from $\pi_{> j}$ to $\pi_{\leq i}$ such that for all $s\in [t]$, 
the path $P_s$ starts with the arc $\varepsilon_{\pi}^i(s)$ and ends with the arc $\varepsilon_{\pi}^j(s)$.
\end{itemize}
\end{definition}

As shown by Chudnovsky and Seymour~\cite{CS11}, there is always a $\cw$-optimal vertex ordering that is linked. Moreover, given a linked vertex ordering, one can construct a linked sequence of ordered cuts.

\begin{lemma}[3.1 of~\cite{CS11}]\label{lem:lvo}
Let $D$ be a digraph. Then there exists a linked vertex ordering $\pi$ of $D$ with $\cw(D,\pi)=\cw(D)$.
\end{lemma}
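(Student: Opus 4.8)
I would argue by a potential-minimisation argument, in the spirit of Thomas's proof that every graph has a lean tree-decomposition. Write $n=|V(D)|$ and, for a vertex ordering $\pi$, set $\Phi(\pi)=\sum_{k=0}^{n}(n+1)^{|E^k_\pi|}$. Since $(n+1)^{\cw(D,\pi)}\le\Phi(\pi)<(n+1)^{\cw(D,\pi)+1}$, any $\pi$ minimising $\Phi$ is automatically $\cw$-optimal; fix such a $\pi$, and assume for contradiction that $\pi$ is not linked, as witnessed by a pair $i<j$ with $|E^i_\pi|=|E^j_\pi|=t$, with $|E^h_\pi|\ge t$ for all $h\in[i,j]$, and with no $t$ arc-disjoint directed paths from $B:=\pi_{>j}$ to $\pi_{\leq i}$. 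Let $A:=\pi_{>i}$, and for a vertex set $S$ write $\partial(S)=|E(S,V(D)\setminus S)|$; thus $\partial(\pi_{>k})=|E^k_\pi|$ for all $k$, and $\partial$ is submodular (a standard fact about the cut function of a digraph).

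The key step is an extremal choice of a ``separating'' set. By the arc version of Menger's theorem, the failure of the $t$-flow yields an arc set of size at most $t-1$ whose removal destroys all directed $B$--$\pi_{\leq i}$ paths; the set of vertices reachable from $B$ after this removal is a set $X$ with $B\subseteq X\subseteq A$ and $\partial(X)\le t-1$. Hence the family $\mathcal S=\{S:\ B\subseteq S\subseteq A,\ \partial(S)<t\}$ is nonempty, and I would choose $X\in\mathcal S$ with $x:=\partial(X)$ \emph{minimum}. The benefit of minimality is that then $\partial(S)\ge x$ for \emph{every} $S$ with $B\subseteq S\subseteq A$ (either $\partial(S)<t$, so $S\in\mathcal S$, or $\partial(S)\ge t>x$). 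Together with submodularity this gives, for every $S$ with $B\subseteq S\subseteq A$,
\[\partial(S\cap X)\le \partial(S)+\partial(X)-\partial(S\cup X)\le\partial(S)\qquad\text{and, symmetrically,}\qquad \partial(S\cup X)\le\partial(S),\]
since $S\cap X$ and $S\cup X$ again lie between $B$ and $A$ and hence have $\partial$ at least $x=\partial(X)$.

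I would then ``uncross'' $\pi$ through $X$, changing only the positions $i+1,\dots,j$. These positions correspond to a chain $B=C_0\subsetneq C_1\subsetneq\dots\subsetneq C_N=A$ (with $N=j-i$ and $C_l=\pi_{>(j-l)}$), whose $\partial$-values are precisely the cut sizes $|E^i_\pi|,\dots,|E^j_\pi|$. Replace this chain by the one obtained by listing first the distinct sets among $C_0\cap X,\dots,C_N\cap X$ (running from $B$ up to $X$) and then the distinct sets among $C_0\cup X,\dots,C_N\cup X$ (running from $X$ up to $A$), and keep $\pi$ unchanged elsewhere; since $B\subsetneq X\subsetneq A$ (because $\partial(B)=\partial(A)=t>x$) this is a chain of the correct length, giving a genuine vertex ordering $\pi'$ that genuinely passes through the new interior set $X$. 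Cuts of $\pi'$ at positions outside $(i,j)$ are unchanged, and by matching each set of the new chain with the first index at which it arises (as some $C_l\cap X$, or as some $C_l\cup X$) one obtains a bijection between the new chain's multiset of cut sizes and $\{\partial(C_0),\dots,\partial(C_N)\}$ under which every new value is at most its partner -- by the displayed inequalities -- while the value $\partial(X)=x$ is matched with some $\partial(C_l)\ge t>x$. Hence the new chain's contribution to $\sum(n+1)^{(\cdot)}$ is strictly smaller, so $\Phi(\pi')<\Phi(\pi)$, contradicting the choice of $\pi$. Therefore $\pi$ is linked.

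The real content, and the place that needs care, is the second paragraph: it is the minimality of $\partial(X)$ among separating sets, fed into submodularity, that keeps \emph{both} $\partial(C_l\cap X)$ and $\partial(C_l\cup X)$ at most $\partial(C_l)$ and thereby guarantees the surgery never enlarges a cut -- a naive choice of a separator (e.g.\ merely of minimum size) would not achieve this. The remaining points -- that concatenating the two uncrossed sub-chains yields a chain of length exactly $N$, that the first-occurrence matching is indeed a bijection onto $\{C_0,\dots,C_N\}$, and that the cuts of $\pi'$ outside $(i,j)$ literally coincide with those of $\pi$ -- are routine bookkeeping.
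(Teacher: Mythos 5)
Your argument is correct, and it is essentially the argument given by Chudnovsky and Seymour for their statement 3.1 (which the present paper cites without reproving): a potential-minimisation over vertex orderings, Menger's theorem to produce a low-order ``separator'' $X$ between $\pi_{>j}$ and $\pi_{\leq i}$, minimality of $\partial(X)$ over all such separators to guarantee $\partial(S\cap X),\partial(S\cup X)\leq\partial(S)$ via submodularity, and a chain uncrossing through $X$ that strictly decreases the potential. The only point where a reader has to pause is the matching of the uncrossed chain to $\{C_0,\dots,C_N\}$: to get a genuine bijection one should take first occurrences within the $\cap$-subsequence for all sets up to and including $X$, and first occurrences within the $\cup$-subsequence for the remaining sets (excluding $X$); with that convention the matched indices are exactly $\{0\}\cup\{l+1: v_l\in X\}\cup\{l+1: v_l\notin X\}=\{0,\dots,N\}$, $X$ gets paired with some $C_{l^*}$ with $\partial(C_{l^*})\geq t>x$, and the strict drop in $\Phi$ follows -- so your ``routine bookkeeping'' claim is indeed justified.
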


\begin{lemma}[5.1 of~\cite{CS11}]\label{lem:lco}
Let $\pi$ be a linked vertex ordering of a digraph $D$.
Then for each $i\in [0,n]$ there exists an ordered cut $\varepsilon_{\pi}^i\in\permut{E_{\pi}^i}$ so that 
$(\varepsilon_{\pi}^0,\dots \varepsilon_{\pi}^n)$ is a linked sequence of ordered cuts.
\end{lemma}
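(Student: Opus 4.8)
The plan is to follow the standard argument for this statement: orient a few carefully chosen cuts, propagate those orientations along a single arc‑disjoint linkage, and then check the two‑bullet condition for all pairs simultaneously.

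I would first group the indices by cut size. Write $f(i)=|E_\pi^i|$; note $f(0)=f(n)=0$, and that any cut of size $0$ is empty and carries no constraint. Fix $t\ge 1$, let $i_1<\dots<i_k$ enumerate the indices with $f(i)=t$, and declare $i_a\sim i_b$ (for $a<b$) when $f(h)\ge t$ for all $h\in[i_a,i_b]$. A direct check shows $\sim$ is an equivalence relation whose classes are intervals of the list $i_1,\dots,i_k$; call these the \emph{level-$t$ clusters}. The point is that the definition of a linked sequence constrains a pair $(i,j)$ with $f(i)=f(j)=t$ only when $i$ and $j$ lie in a common level-$t$ cluster (otherwise the first bullet holds automatically), and since every index $i$ has exactly one value $f(i)$, clusters of distinct levels involve pairwise disjoint index sets, so the levels do not interfere.

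Next, for each level-$t$ cluster $C=\{i_1<\dots<i_k\}$ I would orient the cuts $E_\pi^{i_1},\dots,E_\pi^{i_k}$ simultaneously from one linkage. Since $f(h)\ge t$ throughout $[i_1,i_k]$, the first alternative in the definition of a linked vertex ordering fails for the pair $(i_1,i_k)$, so $\pi$ supplies $t$ arc-disjoint directed paths $P^C_1,\dots,P^C_t$ from $\pi_{>i_k}$ to $\pi_{\leq i_1}$. The crucial structural fact is that each $P^C_s$ uses \emph{exactly one} arc of each $E_\pi^{i_a}$ — at least one, since it runs from $\pi_{>i_a}$ to $\pi_{\leq i_a}$, and at most one since only $t=|E_\pi^{i_a}|$ arcs are available to the $t$ arc-disjoint paths — that along $P^C_s$ these $k$ crossing arcs appear in the order $a=k,k-1,\dots,1$, and that between consecutive crossings $P^C_s$ stays on the side of the partition dictated by the two cuts in question. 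Granting this, I set $\varepsilon_\pi^{i_a}(s)$ to be the unique arc of $E_\pi^{i_a}$ on $P^C_s$, for all $a\in[k]$ and $s\in[t]$; arc-disjointness makes each $\varepsilon_\pi^{i_a}$ a genuine permutation of $E_\pi^{i_a}$. Carried out over all clusters of all levels, and orienting singleton clusters arbitrarily, this defines $\varepsilon_\pi^i$ for every $i\in[0,n]$ without conflict.

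It then remains to verify that $(\varepsilon_\pi^0,\dots,\varepsilon_\pi^n)$ is linked. Given $i<j$ with $f(i)=f(j)=t$: if some $h\in[i,j]$ has $f(h)<t$ the first bullet holds; otherwise $i,j$ lie in a common level-$t$ cluster $C$, say $i=i_a$ and $j=i_b$ with $a<b$. By the structural fact, along each $P^C_s$ the crossing of the cut at the larger index precedes that of the cut at the smaller index, so the portion of $P^C_s$ between these two crossing arcs is a well-defined directed path from $\pi_{>j}$ to $\pi_{\leq i}$ realizing the pairing $\varepsilon_\pi^i(s)\leftrightarrow\varepsilon_\pi^j(s)$ demanded by the definition; these $t$ portions are arc-disjoint, being subpaths of the arc-disjoint $P^C_1,\dots,P^C_t$, so the second bullet holds. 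I expect the only real difficulty to lie in that structural fact — that $t$ arc-disjoint paths across a chain of cuts each of size $t$ are forced to meet each cut in one arc per path and to cross the nested cuts monotonically, without backtracking; this is exactly what makes a single reference linkage simultaneously consistent with every sub-pair of a cluster, whereas the tempting shortcut of handling only consecutive pairs $(i_a,i_{a+1})$ fails, since linkages for consecutive pairs need not stitch together into a linkage for the composite pair.
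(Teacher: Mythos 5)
Your proof is correct and follows essentially the same approach as the cited proof of Chudnovsky and Seymour (their Lemma~5.1): decompose the index set into maximal ``plateaus'' (your level-$t$ clusters) where the cut size stays at least $t$, extract a single $t$-linkage across each plateau from the linked-ordering property applied to its two endpoints, and read off the ordered cuts from that one linkage. The load-bearing structural fact you isolate — that $t$ arc-disjoint paths from $\pi_{>i_k}$ to $\pi_{\leq i_1}$ must meet every intermediate cut of size $t$ in exactly one arc per path (pigeonhole), hence never recross, hence meet the cuts in monotone order — is exactly what makes one reference linkage simultaneously certify the second bullet for every pair within the cluster, and your remark that stitching linkages for consecutive pairs would not work is precisely why the argument is organized this way.
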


\paragraph{Codewords and domination relation.} 
Let $L$ be a finite set of labels and $c$ be a nonnegative integer.  
An \emph{$(L,c)$-codeword}\footnote{Observe that in~\cite{CS11}, the term \emph{codeword} is used differently: our definition of an $(L,c)$-codeword corresponds to an $(L,c)$-gap sequence in~\cite{CS11}.} is defined as a triple $(n,\lambda,\zeta)$ where $n$ is a positive integer, $\lambda \colon [n] \rightarrow L$, and $\zeta \colon [n-1] \rightarrow [0,c]$. 
The set of $(L,c)$-codewords is equipped with a partial order $\domleq$, called \emph{domination}, defined as follow. 
Given two $(L,c)$-codewords $(n,\lambda,\zeta)$ and $(n',\lambda',\zeta')$, we have $(n,\lambda,\zeta) \domleq (n',\lambda',\zeta')$ if and only if there exists
a strictly increasing function $f\colon [n]\to [n']$, called the {\em{embedding}}, such that
\begin{itemize}
\item for all $j \in [n]$, we have $\lambda(j) = \lambda'(f(j))$; and 
\item for all $j \in [n-1]$ and all $i \in [f(j),f(j+1)-1]$, we have $\zeta(j) \leq \zeta'(i)$.
\end{itemize}
Observe that a $(L,c)$-codeword can be seen as a directed path with $n$ vertices with labels on both vertices and arcs.
Vertices are labeled by $\lambda$ with labels from $L$, while arcs are labeled by $\zeta$ with integers from $[0,c]$; 
here, an argument $i\in [n-1]$ of $\zeta$ is intepreted as the arc from the $i$th to the $(i+1)$st vertex of the path.
As in~\cite{CS11}, we can use the variant of Higman's lemma due to \cite{Kri89,Sim85} to infer the following. Note here that $L$ is a finite set, so equality on it is a well-quasi-ordering.

\begin{lemma}\label{lem:WQO}
The set of $(L,c)$-codewords with the domination order is a well-quasi-ordering.
\end{lemma}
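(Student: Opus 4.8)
The plan is to derive Lemma~\ref{lem:WQO} from the gap-variant of Higman's lemma of \cite{Kri89,Sim85}, applied to a carefully chosen encoding of $(L,c)$-codewords as labelled sequences. Recall that this variant asserts that, for a well-quasi-ordering $Q$, the finite sequences over $Q$ are well-quasi-ordered by the \emph{gap embedding}: $\sigma$ embeds into $\tau$ if there is a strictly increasing $g$ with $\sigma(p)\le_Q\tau(g(p))$ for every position $p$, subject to a gap condition requiring that every entry of $\tau$ lying strictly between two consecutive images $g(p)$ and $g(p+1)$ dominates, in $Q$, an appropriate one of the two labels $\sigma(p),\sigma(p+1)$. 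The crucial choice is the quasi-order we put on the label alphabet: let $Q = L \sqcup [0,c]$, where $[0,c]$ keeps its usual linear order, the elements of $L$ form an antichain, and every element of $[0,c]$ is placed strictly below every element of $L$. Being a finite poset, $Q$ is a well-quasi-ordering. To an $(L,c)$-codeword $(n,\lambda,\zeta)$ we then associate the sequence over $Q$ of length $2n-1$ whose entry at position $2j-1$ is $\lambda(j)$, for $j\in[n]$, and whose entry at position $2j$ is $\zeta(j)$, for $j\in[n-1]$; this is nothing but the ``directed path with labels on both its vertices and its arcs'' already noted in the text, now read as a single $Q$-labelled path.

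The point of this choice of $Q$ is that the gap embedding between two such encodings coincides \emph{exactly} with the relation $\domleq$ between the underlying codewords. On the one hand, an $L$-labelled position can be matched, under $\le_Q$, only to a position carrying the same element of $L$ (nothing else is $\ge_Q$ an element of $L$), so the matching of the $L$-positions of the two sequences is a strictly increasing map $f\colon[n]\to[n']$ with $\lambda(j)=\lambda'(f(j))$ for all $j$ --- the first condition defining $\domleq$. On the other hand, among the entries of the target strictly between the images of two consecutive $L$-positions, those labelled by elements of $L$ satisfy the gap condition automatically, since they lie above everything in $Q$; hence the gap condition is controlled entirely by the entries labelled by $[0,c]$, that is, by the arc labels, and unwinding which arc labels of the target fall between the images of $\lambda(j)$ and $\lambda(j+1)$ yields precisely the requirement that $\zeta(j)\le\zeta'(i)$ for all $i\in[f(j),f(j+1)-1]$ --- the second condition defining $\domleq$. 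Conversely, from any witness $f$ for $(n,\lambda,\zeta)\domleq(n',\lambda',\zeta')$ one reads off a gap embedding of the corresponding $Q$-labelled paths. Thus the encoding is an order isomorphism onto its image, and Lemma~\ref{lem:WQO} follows by applying the cited gap-variant of Higman's lemma to $Q$.

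The one step that requires genuine thought is the design of the quasi-order on $Q$; the rest is routine. That design is what disentangles the two roles played by labels: the vertex labels from $L$, which must be matched exactly and must not participate in any gap constraint, and the arc labels from $[0,c]$, which must be dominated along the entire stretch between two consecutive vertices. Seating the $L$-labels above all of $[0,c]$ simultaneously forces exact matching on $L$ and renders $L$-labels neutral for the gap condition; it also has the pleasant effect of making the argument robust to the precise form of the cited lemma (strong versus weak gap condition, the choice of endpoint the intermediate labels are compared to, a finite versus a well-quasi-ordered alphabet), since under any of these the gap condition collapses to the inequality on $[0,c]$-labels that we need. Note finally that the finiteness of $L$ is used only to guarantee that $Q$ is a well-quasi-ordering, so the same argument works verbatim with vertex labels taken from any well-quasi-order. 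The routine parts that would fill out a complete proof --- checking that the associated map $g$ is strictly increasing, tracking how arc positions of the source are matched, and the interval bookkeeping in both directions --- present no difficulty.
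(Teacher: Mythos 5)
Your encoding of an $(L,c)$-codeword as a single sequence over the finite poset $Q = L \sqcup [0,c]$, with the chain $[0,c]$ placed strictly below the antichain $L$, is a genuinely nice idea, and for the one-sided versions of the gap condition the equivalence with $\domleq$ that you outline does hold. However, two of the points you dismiss as routine are actually where the argument can break. The ``robustness'' claim is false for at least one natural reading of the gap condition: if intermediate target entries are required to dominate \emph{both} adjacent source labels, then in your encoding every gap is forced to be empty, because an intermediate even position carries a $[0,c]$-label, which cannot dominate the $L$-labelled source vertex on its other side. The gap embedding then collapses to the contiguous-factor order on codewords, which is not a WQO even over a two-letter alphabet (the words $a b^n a$ form an infinite antichain). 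So the precise form of the gap condition matters and has to be pinned down, not waved off.

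Second, and more fundamentally, the lemma you invoke is a gap-variant of Higman's lemma for sequences over an arbitrary well-quasi-ordered (or finite poset) alphabet $Q$. The result attributed to Kriz and Simpson is the tree theorem with labels drawn from the finite chain $[0,c]$; your $Q$ is not a chain. The paper, following Chudnovsky and Seymour, instead applies a two-alphabet version of the gap lemma in which the vertex labels come from a separate well-quasi-order (here, equality on the finite set $L$, hence the remark in the paper about $L$ being finite) and the gap labels come from $[0,c]$. Your encoding produces, but does not justify, a single-alphabet reformulation of that statement. To make this route rigorous you would either need to state and prove the $Q$-alphabet gap lemma (for instance by unbundling $Q$ back into its two pieces, at which point you are back at the two-alphabet statement the paper uses) or cite a precise source for it.
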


To prove that tournaments are well-quasi-ordered under strong immersions, Chudnovsky and Seymour~\cite{CS11} represent a tournament equipped with a linked vertex ordering using an $(L,c)$-codeword, for some finite $L$. They show that given two tournaments $T$ and $T'$, if a $(L,c)$-codeword representing $T$ is dominated by the one representing $T'$, then the domination relation allows to reconstruct the immersion of $T$ in $T'$. We explain this formally in the next section.
%
%
%


\section{Semi-complete digraphs are well-quasi-ordered under strong immersion}
\label{sec:proof}



This section is devoted to a proof of Lemma~\ref{lem:wqo} restated below:

\setcounter{lemma}{1}
\begin{lemma}
For every nonnegative integer $c$, the strong immersion relations is a well-quasi-ordering on semi-complete digraphs of cutwidth at most $c$.
\end{lemma}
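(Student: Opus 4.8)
The plan is to encode each semi-complete digraph of cutwidth at most $c$, together with a well-chosen layout, by an $(L,c)$-codeword over a finite label set $L=L(c)$, in such a way that codeword domination forces strong immersion; Lemma~\ref{lem:WQO} then finishes the proof. Given $D$ on $n$ vertices with $\cw(D)\le c$, I would first apply Lemma~\ref{lem:lvo} to obtain a linked $\cw$-optimal vertex ordering $\pi=(v_1,\dots,v_n)$, and then Lemma~\ref{lem:lco} to obtain a linked sequence of ordered cuts $(\varepsilon_\pi^0,\dots,\varepsilon_\pi^n)$. Define $\mathrm{code}(D)=(n,\lambda,\zeta)$ with the $i$-th position standing for $v_i$: put $\zeta(i)=|E^i_\pi|\in[0,c]$, the width of the cut immediately after $v_i$, and let $\lambda(i)$ record the \emph{local} picture at $v_i$ --- the widths $|E^{i-1}_\pi|,|E^i_\pi|$, the positions in $\varepsilon_\pi^{i-1}$ of the feedback arcs with tail $v_i$ (those destroyed at step $i$), the positions in $\varepsilon_\pi^i$ of the feedback arcs with head $v_i$ (those created at step $i$), the bijection between the surviving positions of $\varepsilon_\pi^{i-1}$ and of $\varepsilon_\pi^i$ obtained by following the arcs that pass through $v_i$, and --- the new ingredient, forced by the presence of symmetric arcs --- for each feedback arc with tail $v_i$, whether its reverse is also an arc of $D$. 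All these data range over sets of size bounded in terms of $c$, so $L$ is finite and Lemma~\ref{lem:WQO} applies to the codewords of any infinite sequence of such digraphs. A routine check then shows that $\mathrm{code}(D)$ in fact determines $D$: tracing the ``threads'' of passing arcs between created and destroyed positions reconstructs every feedback arc with its two endpoints, the recorded reversal flags reconstruct all symmetric pairs, and semi-completeness forces a unique forward arc on every remaining pair.

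The core of the argument, and the only place where we must depart from~\cite{CS11}, is the implication: if $\mathrm{code}(D)\domleq\mathrm{code}(D')$ via a strictly increasing $f\colon[n]\to[n']$, then $D$ strongly immerses in $D'$. I would set $\mu(v_\ell)=v'_{f(\ell)}$, which is injective, and then build the images of arcs. The arcs of $D$ split into feedback arcs, ``forward-only'' arcs (between pairs with no feedback arc --- the only non-feedback arcs occurring in tournaments), and forward arcs of symmetric pairs. The first two families are routed into $D'$ by the machinery of~\cite{CS11}: agreement of the labels along $f$, the $\zeta$-condition (which certifies that every cut of $D'$ lying strictly between two consecutive images is at least as wide as the corresponding cut of $D$), and the linkedness of the ordered cuts of $D'$ together let one splice the ``matching'' feedback threads of $D'$ into arc-disjoint directed paths that avoid every vertex image. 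The genuinely new case is a forward arc $(v_a,v_b)$ of $D$ that is the reverse of a feedback arc $(v_b,v_a)$: the reversal flag in $\lambda(b)$, transported by the domination, certifies that the feedback path of $D'$ realizing $(v_b,v_a)$ begins and ends at arcs that are themselves reversed in $D'$, and one must leverage this to produce a directed path from $v'_{f(a)}$ to $v'_{f(b)}$ in $D'$ that is arc-disjoint from everything built so far and misses every other image $v'_{f(c)}$.

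I expect this last point to be the main obstacle, and it is essentially an accounting problem: in the tournament case every pair of images is joined by a single used forward arc or by a feedback path, with nothing to spare, whereas a symmetric pair of $D$ demands \emph{simultaneously} a feedback path and a disjoint forward path between the same two images. The natural remedy is to strengthen the construction behind Lemma~\ref{lem:lco}, so that whenever it returns $t$ arc-disjoint feedback paths witnessing a return of the cut width to $t$, it also reserves, for each symmetric pair crossing that stretch, a private forward path disjoint from the feedback paths; equivalently, one works in an auxiliary digraph in which each symmetric pair contributes an arc together with its reverse and tracks the two in tandem, so that the reversal flags recorded in $\mathrm{code}(D)$ suffice to carry a coordinated routing over to $D'$. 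Proving that these reserved forward paths can always be chosen disjointly from the feedback paths is the technical heart of the proof, and is presumably where the paper's new combinatorial ideas enter.
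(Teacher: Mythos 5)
Your encoding is close in spirit to the paper's: you correctly identify that the label at position $i$ must carry the local profile of the two adjacent ordered cuts and, crucially, a reversal flag for each feedback arc. However, the paper's label also carries a third component, the residue $i \bmod (4c+1)$, and this component is the engine of the whole argument. You omit it, and your substitute strategy --- strengthening Lemma~\ref{lem:lco} so that a linked sequence of ordered cuts simultaneously reserves a private forward path for each symmetric pair, or routing symmetric pairs ``in tandem'' in an auxiliary digraph --- is not what the paper does and is not developed to the point where one can see it working. You yourself flag this as ``presumably where the paper's new combinatorial ideas enter,'' so the gap is acknowledged; let me name the idea you are missing.

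The paper never touches Lemma~\ref{lem:lco}. Instead, after splitting $E(S)=E_1\uplus E_2$ (with $T=(V,E_1)$ a tournament) and immersing $T$ into $S'$ via the tournament machinery, it splits $E_2$ further according to the embedding $f$: an arc $(\pi_j,\pi_h)\in E_2$ goes into $F_1$ if $f(h)-f(j)=h-j$ and into $F_2$ otherwise. For $F_1$ the gap is tight, so equality of the full labels (profile, reversal flags, and residues) along the interval forces $S[\pi_{\geq j}\cap\pi_{\leq h}]$ and $S'[\pi'_{\geq f(j)}\cap\pi'_{\leq f(h)}]$ to be isomorphic, hence the direct forward arc $(\pi'_{f(j)},\pi'_{f(h)})$ exists in $S'$ and is unused; note that here the reversal flag is used to certify isomorphism, not to transfer reversed endpoints of feedback paths as you suggest. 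For $F_2$, the residue component does the work: since $j\equiv f(j)$ and $h\equiv f(h)\pmod{4c+1}$ and $f(h)-f(j)>h-j$, one gets $f(h)-f(j)\geq h-j+(4c+1)$, so at least $4c+1$ vertices strictly between the two images are not images of any vertex of $S$. Among those, at most $c$ can be tails of feedback arcs into $\pi'_{f(j)}$ and at most $c$ can be heads of feedback arcs out of $\pi'_{f(h)}$; semi-completeness then guarantees at least $2c+1$ ``free pivots'' $\pi'_i$ with both $(\pi'_{f(j)},\pi'_i)$ and $(\pi'_i,\pi'_{f(h)})$ present and non-feedback, and neither arc can lie on any already-built image because those are either endpoint-feedback arcs or single forward arcs between images. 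A greedy choice over $F_2$ then succeeds because each of $\pi'_{f(j)}$ and $\pi'_{f(h)}$ is an endpoint of at most $c$ symmetric pairs of $S$, so at most $c$ of the $2c+1$ pivots on either side can already be consumed. Without the residue component in the label you have no lower bound on the slack $f(h)-f(j)-(h-j)$ in the non-tight case, and the free-pivot counting collapses; this is the concrete obstruction your proposal does not overcome.
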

\setcounter{lemma}{5}

The idea of the proof is as follows. 
We first define $(L,c)$-codewords, for some finite $L$, in order to represent the structure of a semi-complete digraph equipped with a linked vertex ordering. 
The way we construct $(L,c)$-codewords will extend the $(L,c)$-codewords defined by Chudnowsky and Seymour~\cite{CS11} to represent tournaments.
Given two semi-complete digraphs $S$ and $S'$, immersion of $S$ in $S'$ will be constructed from the domination relation between the respective $(L,c)$-codewords. 
We proceed in two steps. First we partition the arc set $E(S)$ of $S$ into $E_1$ and $E_2$ such that $T=(V(S),E_1)$ forms a tournament. 
Then we apply Chudnovsky and Seymour~\cite{CS11} to build an immersion model $\mu$ of $T$ in $S$. 
It then remains to extend $\mu$ to an immersion model of $S$ in $S'$ by mapping the arcs of $E_2$ to ``free'' paths of $S'$.

\newcommand{\codeword}[2]{\mathsf{code}\langle #1,#2\rangle}
\newcommand{\profile}[2]{\mathsf{profile}\langle #1,#2\rangle}
\newcommand{\symmetric}[2]{\mathsf{symmetric}\langle #1,#2\rangle}

\medskip
Let us first describe how a semi-complete digraph $S$, such that $\cw(S)\leq c$ for some nonnegative integer $c$, can be represented by an $(L,c)$-codeword, for some finite $L$.
Let $\pi$ be a vertex ordering of  $S$ such that  $\cw(S,\pi)\leq c$.
Fix any sequence of ordered cuts $\sigma=(\varepsilon_{\pi}^0,\dots \varepsilon_{\pi}^n)$ of $\pi$.
Based on these, we will define the \emph{encoding of $S$ with respect to $(\pi,\sigma)$} as 
$$(n,\codeword{S}{\pi,\sigma},\cutvector{S}{\pi}).$$

Recall here that $\cutvector{S}{\pi}=(|E_\pi^0|,|E_\pi^1|,\ldots,|E_\pi^{n}|)$, hence we need to define $\codeword{S}{\pi,\sigma}$.
Intuitively, $\codeword{S}{\pi,\sigma}$ is a function tailored to representing, for every $i\in[n]$, 
the structure of consecutive ordered cuts $(\varepsilon_{\pi}^{i-1},\varepsilon_{\pi}^{i})$ of $\pi$ using only a finite set of labels. 
To that aim, we define an equivalence relation $\sim$ over pairs of ordered cuts of size at most $c$. Let $(E_1,E_2)$ and $(F_1,F_2)$ be two pairs of cuts of size at most $c$ and consider
ordered cuts $\varepsilon_1\in\permut{E_1}$,  $\varepsilon_2\in\permut{E_2}$,  $\varphi_1\in\permut{F_1}$,  $\varphi_2\in\permut{F_2}$. 
Then $(\varepsilon_1,\varepsilon_2)\sim (\varphi_1,\varphi_2)$ if and only if the following conditions
are satisfied:
\begin{itemize}
\item $|E_1|=|F_1|$ and  $|E_2|=|F_2|$; and 
\item $\varepsilon_1(i)=\varepsilon_2(j)$ if and only if $\varphi_1(i)=\varphi_2(j)$, for all relevant indices $i,j$. 
\end{itemize}
 
 \noindent
To complete the description of the function $\codeword{S}{\pi,\sigma}$, for every $i\in [n]$, we define:
$$\codeword{S}{\pi,\sigma}(i)=(\ \profile{S}{\pi,\sigma}(i)\ ,\ \symmetric{S}{\pi,\sigma}(i)\ ,\ i \bmod (4c+1)\ ),$$ 
where
\begin{itemize}
\item $\profile{S}{\pi,\sigma}(i)$ is the equivalence class of $(\varepsilon_{\pi}^{i-1},\varepsilon_{\pi}^{i})$ with respect to $\sim$; and
\item $\symmetric{S}{\pi,\sigma}(i)\colon E_{\pi}^{i-1}\cup E_{\pi}^{i}\rightarrow \{\top,\bot\}$ is a function distinguishing symmetric arcs from others.
That is, for $(u,v)\in E_{\pi}^{i-1}\cup E_{\pi}^{i}$ we have $\symmetric{S}{\pi,\sigma}(i)(u,v) = \top$ if and only if we also have $(v,u) \in E(S)$.
\end{itemize}
We remark that the information encoded in a codeword by Chudnovsky and Seymour in~\cite{CS11} is exactly $\profile{S}{\pi,\sigma}$ and $\cutvector{S}{\pi}$.
Here, we extend this information by two components: $\symmetric{S}{\pi,\sigma}$ that stores information on symmetric arcs, and the remainder of the index modulo $4c+1$.
The latter will be used for a technical reason in the proof.

Observe that, as $\cw(S,\pi)\leq c$, the size of every cut of $\pi$ is at most $c$ and the number of equivalence classes of $\sim$ is bounded by a function of $c$.
It follows that all the values of $\profile{S}{\pi,\sigma}(i)$ and of $\codeword{S}{\pi,\sigma}(i)$, for $i\in [0,n]$, belong to some finite
sets whose sizes are bounded by a function of $c$. Let us denote by $L^c_{\mathsf{profile}}$ and $L^c_{\mathsf{code}}$ these (finite) sets of possible values, respectively.
It follows that:



\begin{observation}
Let $S$ be a semi-complete digraph with $\cw(S)\leq c$. 
Suppose $\sigma=(\varepsilon_{\pi}^0,\dots \varepsilon_{\pi}^n)$ is a sequence of ordered cuts of a $\cw$-optimal vertex ordering $\pi$ of $S$.
Then $(n,\profile{S}{\pi,\sigma},\cutvector{S}{\pi})$ is an $(L^c_{\mathsf{profile}},c)$-codeword and $(n,\codeword{S}{\pi,\sigma},\cutvector{S}{\pi})$ is an $(L^c_{\mathsf{code}},c)$-codeword.
\end{observation}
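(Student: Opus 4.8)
The statement is a routine verification against the definition of an $(L,c)$-codeword, so the plan is simply to check the three defining requirements --- that the first component is a positive integer, that the second is a map from $[n]$ into the claimed finite alphabet, and that the third is a map into $[0,c]$ --- first for $(n,\profile{S}{\pi,\sigma},\cutvector{S}{\pi})$ and then, building on that, for $(n,\codeword{S}{\pi,\sigma},\cutvector{S}{\pi})$. Positivity of $n$ is immediate once we assume $S$ is nonempty. The third-component condition is equally immediate: since $\pi$ is $\cw$-optimal and $\cw(S)\le c$, we have $\cw(S,\pi)=\cw(S)\le c$, hence $0\le|E^i_\pi|\le c$ for all $i$, so every coordinate of $\cutvector{S}{\pi}$ lies in $[0,c]$. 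Along the way I would reconcile the indexing conventions: the codeword-path has $n$ vertices, the $i$-th corresponding to the consecutive pair $(E^{i-1}_\pi,E^i_\pi)$ for $i\in[n]$, its $n-1$ internal arcs then carry the cut sizes $|E^1_\pi|,\dots,|E^{n-1}_\pi|$, and the two extreme cuts $E^0_\pi=E^n_\pi=\emptyset$ contribute nothing --- this is pure bookkeeping.

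The only point with any content is finiteness of the two alphabets, which I would argue exactly as in the paragraph preceding the statement. Because every cut has size at most $c$, each pair $(\varepsilon_\pi^{i-1},\varepsilon_\pi^i)$ is a pair of linearly ordered sets of size at most $c$, and by definition of $\sim$ its equivalence class is determined by the pair of sizes together with the bipartite ``coincidence pattern'' that records which position of $\varepsilon_\pi^{i-1}$ equals which position of $\varepsilon_\pi^i$ as an arc of $S$; there are only finitely many such data --- a number bounded in terms of $c$ alone --- so the set $L^c_{\mathsf{profile}}$ of $\sim$-classes that actually arise is finite, and $\profile{S}{\pi,\sigma}$ maps $[n]$ into it. For $\codeword{S}{\pi,\sigma}(i)$ I would observe that its first coordinate lies in $L^c_{\mathsf{profile}}$, its third lies in $\{0,1,\dots,4c\}$, and its middle coordinate $\symmetric{S}{\pi,\sigma}(i)$ is a $\{\top,\bot\}$-valued function on the at most $2c$ arcs of $E^{i-1}_\pi\cup E^i_\pi$, which --- once read off the slot structure already fixed by the profile rather than referring to concrete arcs of $S$ --- leaves at most $2^{2c}$ possibilities; hence $\codeword{S}{\pi,\sigma}$ maps $[n]$ into a finite set $L^c_{\mathsf{code}}$ whose size is bounded in terms of $c$. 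Putting the three checks together shows that both triples are $(L,c)$-codewords for the respective alphabets.

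I do not expect a genuine obstacle: the statement merely packages the remark made immediately above it into the codeword terminology. The only two spots asking for a moment's care are (i) recording the symmetric-arc component $\symmetric{S}{\pi,\sigma}$ abstractly, relative to the profile's slot structure, so that $L^c_{\mathsf{code}}$ is genuinely independent of $S$; and (ii) matching the length-$n$ codeword convention to the $n+1$ cuts $E^0_\pi,\dots,E^n_\pi$ and the respective domain conventions of $\zeta$ and of $\cutvector{S}{\pi}$ --- both routine once one notes that $E^0_\pi$ and $E^n_\pi$ are empty.
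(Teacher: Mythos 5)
Your verification is correct and follows exactly the same reasoning the paper sketches in the paragraph immediately preceding the observation: the cut sizes are bounded by $c$ since $\pi$ is $\cw$-optimal, and the alphabets are finite because each label is determined by data (a pair of ordered cuts of size at most $c$ up to $\sim$, a $\{\top,\bot\}$-valued function on at most $2c$ slots, and a residue modulo $4c+1$) whose range is bounded in terms of $c$ alone. Your two ``spots asking for care''---reconciling the domain $[0,n]$ of $\cutvector{S}{\pi}$ with the domain $[n-1]$ of $\zeta$ via $E^0_\pi=E^n_\pi=\emptyset$, and reading $\symmetric{S}{\pi,\sigma}(i)$ abstractly off the profile's slot structure so the alphabet does not depend on $S$---are legitimate bookkeeping points that the paper leaves implicit, and you resolve them correctly.
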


It essentially remains to prove that a strong immersion between two semi-complete digraphs of cutwidth at most $c$ can be inferred from the domination relation between two $(L,c)$-codewords associated with them.
In~\cite{CS11}, this claim for the tournament case is proved in (5.2). The following Lemma~\ref{lem:dom-tour} is a rephrasing of this result, with some additional assertions added.
These assertions follow from a straightforward inspection of the proof of (5.2) in~\cite{CS11}; let us review them quickly.
\begin{itemize}
\item We do not need to assume that the vertex ordering of the embedded tournament is linked. Similarly for the sequence of ordered cuts associated with it.
\item We do not need to assume that the host digraph $S$ is a tournament; semi-completeness of $S$ suffices. 
\item The additional properties of the constructed immersion model $\mu$ follow directly from the construction.
\end{itemize}

\begin{lemma}[\cite{CS11}]\label{lem:dom-tour} 
Let $\pi$ be a vertex ordering of a tournament $T$ on $n$ vertices such that $\cw(T,\pi)\leq c$ and $\pi'$ be a linked vertex ordering of a semi-complete digraph $S'$ on $n'$ vertices 
such that $\cw(S,\pi')\leq c$.
Suppose further that $\sigma$ and $\sigma'$ are sequences of ordered cuts for $\pi$ and $\pi'$, respectively, where $\sigma'$ is linked.
Finally, suppose that
$$(n,\profile{T}{\pi,\sigma},\cutvector{T}{\pi}) \domleq (n',\profile{S'}{\pi',\sigma'},\cutvector{S'}{\pi'}),$$
which is certified by an embedding $f\colon [n]\to [n']$.
Then there exists a strong immersion model $\mu$ of $T$ in $S$ where $\mu(\pi_j)=\pi'_{f(j)}$ for all $j\in [n]$.
Moreover, for every arc $e$ of $S$, 
\begin{itemize}
\item \emph{[(3) in proof of (5.2) in \cite{CS11}]} if $e$ is a feedback arc in $\pi$, then the path $\mu(e)$ both starts and ends with a feedback arc in $\pi'$,
\item \emph{[(5) in proof of (5.2) in \cite{CS11}]} if $e$ is not a feedback arc in $\pi$, then the path $\mu(e)$ consists of one arc.
\end{itemize}
\end{lemma}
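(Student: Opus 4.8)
The plan is to reproduce the proof of (5.2) in~\cite{CS11}, checking along the way that it goes through under the two weakenings stated in the preceding discussion — namely that only semi-completeness of the host $S'$ is used (never that $S'$ is a tournament), and that linkedness of $\pi$ and of $\sigma$ is never invoked on the $T$-side — and then reading off the two extra bullet assertions directly from the construction. The immersion $\mu$ is assembled in three layers: the vertices of $T$, the non-feedback arcs of $T$, and the feedback arcs of $T$.

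On the vertices I would simply set $\mu(\pi_j):=\pi'_{f(j)}$; this is injective since $f$ is strictly increasing, and the label condition of $\domleq$ supplies the equalities $\profile{T}{\pi,\sigma}(j)=\profile{S'}{\pi',\sigma'}(f(j))$ for all $j\in[n]$, which drive everything else. For a non-feedback (forward) arc $e=(\pi_i,\pi_j)$ of $T$ (so $i<j$) I would put $\mu(e)$ to be the single arc $(\pi'_{f(i)},\pi'_{f(j)})$. The only point to verify is that this arc actually lies in $S'$, and this is exactly the place where $T$ being a tournament is used: if $(\pi'_{f(i)},\pi'_{f(j)})\notin E(S')$, then by semi-completeness of $S'$ the arc $(\pi'_{f(j)},\pi'_{f(i)})$ is present and asymmetric, hence a feedback arc of $\pi'$ living in the cuts $E_{\pi'}^{f(i)},\dots,E_{\pi'}^{f(j)-1}$; tracking this strand across the matched profiles $\profile{T}{\pi,\sigma}(i),\dots,\profile{T}{\pi,\sigma}(j)$ — using the cut-size inequalities provided by $\domleq$ (so that the cut sizes of $S'$ between consecutive images $f(k)$ and $f(k+1)$ never drop below the corresponding cut size of $T$) together with linkedness of $\sigma'$ to thread the correspondence through the non-image positions — forces a feedback arc $(\pi_j,\pi_i)$ of $T$ to exist, contradicting that $T$ is a tournament. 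This yields bullet assertion~(5) immediately.

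For a feedback arc $e=(\pi_a,\pi_b)$ of $T$ (so $b<a$), I would locate $e$ in the ordered cuts $\varepsilon_\pi^b,\dots,\varepsilon_\pi^{a-1}$, say at positions $p_b,\dots,p_{a-1}$, and transport these positions via the matched profiles into positions in $\varepsilon_{\pi'}^{f(b)},\dots,\varepsilon_{\pi'}^{f(a)-1}$. Between two consecutive images $f(k)$ and $f(k+1)$ the profile equalities give that $E_{\pi'}^{f(k)}$ and $E_{\pi'}^{f(k+1)-1}$ have the same size as $E_\pi^k$, and the cut-size inequalities of $\domleq$ give that no $S'$-cut strictly in between is smaller; hence linkedness of $\sigma'$ supplies $|E_\pi^k|$ arc-disjoint ``bridging'' paths threading the $s$-th arc of $\varepsilon_{\pi'}^{f(k)}$ to the $s$-th arc of the next matched ordered cut, for all $s$ simultaneously. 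Stitching together the bridging paths indexed by $p_b,\dots,p_{a-1}$ along the stretch $[f(b),f(a)]$ produces a directed path from $\pi'_{f(a)}$ to $\pi'_{f(b)}$ whose first and last arcs are the feedback arcs $\varepsilon_{\pi'}^{f(a)-1}(p_{a-1})$ and $\varepsilon_{\pi'}^{f(b)}(p_b)$ of $\pi'$, which is precisely what bullet assertion~(3) demands.

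The bulk of the work — and the step I expect to be the main obstacle — is the global book-keeping that makes the three layers mutually consistent: one must show that over all arcs of $T$ the assigned paths are pairwise arc-disjoint and, crucially, that no feedback path $\mu(e)$ runs through a vertex $\mu(\pi_k)=\pi'_{f(k)}$ with $\pi_k$ not incident to $e$ (condition~(4) in the definition of strong immersion). A naive concatenation of bridging paths at the intermediate images $f(b+1),\dots,f(a-1)$ would violate the latter, so the bridging paths have to be chosen and composed so that the resulting feedback path bypasses those vertices; as in~\cite{CS11} this is arranged by keeping the paths inside the relevant cuts and, where forced, shortcutting through $S'$ using semi-completeness, while maintaining arc-disjointness with all previously committed paths. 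This delicate simultaneous control of arc-disjointness and vertex-avoidance is the heart of the proof of~(5.2) in~\cite{CS11}, and I would invoke that argument essentially verbatim; the only adaptations are to read ``tournament'' as ``semi-complete digraph'' for the host $S'$, which is harmless because the argument never uses asymmetry of $S'$, and to observe that on the $T$-side only the ordered cuts $\varepsilon_\pi^i$ are used as location bookkeeping, never any linkedness property of $\pi$ or $\sigma$.
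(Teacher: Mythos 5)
Your proposal takes essentially the same route as the paper: Lemma~\ref{lem:dom-tour} is not reproved there but imported from (5.2) of \cite{CS11}, together with exactly the observations you make, namely that the construction never uses linkedness of $\pi$ or $\sigma$, that only semi-completeness (never asymmetry) of the host $S'$ is needed, and that the two bullet assertions are read off directly from the construction (forward arcs mapped to single arcs, feedback arcs mapped to paths beginning and ending with feedback arcs of $\pi'$). Since you, like the paper, defer the delicate arc-disjointness and vertex-avoidance bookkeeping to the argument of \cite{CS11}, there is nothing substantive to compare beyond this.
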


We are now ready to generalize Lemma~\ref{lem:dom-tour} to the setting of semi-complete digraphs.

\begin{lemma}\label{lem:dom-semi}
Let $S=(V,E)$ and $S'=(V',E')$ be two semi-complete digraphs with associated vertex orderings $\pi$ and $\pi'$, respectively, where $\pi'$ is linked and such that $\cw(S,\pi)\leq c$ and $\cw(S',\pi')\leq c$.
Suppose further that $\sigma$ and $\sigma'$ are sequences of ordered cuts for $\pi$ and $\pi'$, respectively, where $\sigma'$ is linked.
Finally, suppose that
$$(n,\codeword{S}{\pi,\sigma},\cutvector{S}{\pi})\domleq (n',\codeword{S'}{\pi',\sigma'},\cutvector{S'}{\pi'}).$$ 
Then $S$ can be strongly immersed in $S'$.
\end{lemma}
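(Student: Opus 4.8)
The plan is to reduce to Lemma~\ref{lem:dom-tour} by splitting the arc set of $S$ into a tournament part and a part consisting of ``extra'' symmetric arcs, handle the tournament part with Chudnovsky--Seymour, and then route the extra arcs through $S'$ using the additional bookkeeping carried by $\codeword{}{}$ but not by $\profile{}{}$. Concretely: for every pair $\{u,v\}$ of vertices of $S$ joined by symmetric arcs $(u,v),(v,u)$, keep exactly one of the two arcs in a set $E_1$ and put the other in $E_2$, choosing the one kept so that $T := (V,E_1)$ is a tournament; for every non-symmetric pair keep the unique arc in $E_1$. Note $\profile{T}{\pi,\sigma}=\profile{S}{\pi,\sigma}$ as functions of $i$, because the profiles only record the equivalence class of $(\varepsilon_\pi^{i-1},\varepsilon_\pi^i)$, and the cuts $E_\pi^i$ of $T$ and $S$ agree (feedback arcs are unaffected by deleting some non-feedback symmetric partner — and in fact we may and do pick the orientation in $E_1$ so that deleted arcs are the forward ones, so $E_\pi^i(T)=E_\pi^i(S)$ for all $i$). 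Since $\codeword{}{}\domleq\codeword{}{}$ in particular projects to $\profile{}{}\domleq\profile{}{}$ with the same embedding $f$, Lemma~\ref{lem:dom-tour} yields a strong immersion model $\mu$ of $T$ in $S'$ with $\mu(\pi_j)=\pi'_{f(j)}$, where each forward arc of $T$ maps to a single arc of $S'$ and each feedback arc maps to a path in $S'$ that begins and ends with a feedback arc.

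The core new work is to extend $\mu$ to the arcs of $E_2$. Fix $(v,u)\in E_2$, whose partner $(u,v)$ lies in $E_1$; write $a=\sigma^{-1}(\mu^{-1}$-index of the head/tail positions$)$, more precisely let $p<q$ be the positions in $\pi'$ of $\mu(u)$ and $\mu(v)$ — so the arc $(u,v)$ of $T$, being a backward/forward arc, dictates which of $\mu(u),\mu(v)$ is earlier. We need an arc-disjoint directed path in $S'$ from $\mu(v)$ to $\mu(u)$ avoiding all $\mu(w)$ with $w\notin\{u,v\}$, and disjoint from all previously used arcs. Here is where $\symmetric{}{}$ and the residue $i\bmod(4c+1)$ enter: the domination $\codeword{S}{}\domleq\codeword{S'}{}$ guarantees that at the image positions $f(i)$ the cut $E_{\pi'}^{f(i)}$ of $S'$ contains, in the matched coordinate, an arc that is \emph{symmetric in $S'$} whenever the corresponding arc was symmetric in $S$; and it guarantees that two image positions carrying symmetric arcs that in $S$ were ``close'' stay close in $S'$ as well, via the modular residue, so that the gap between consecutive relevant cuts in $S'$ is at most $4c+1$ indices. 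The plan is: for $(v,u)\in E_2$, locate a position $h$ between $p$ and $q$ such that the matched arc in $E_{\pi'}^h$ is symmetric in $S'$ — its reverse, a feedback arc at a small-width cut, can be followed ``locally'' (within $O(c)$ vertices, whence the $4c+1$) to string together a short backward path from $\mu(v)$ to $\mu(u)$ — and do this so that distinct arcs of $E_2$ use disjoint local windows. The residue $i\bmod(4c+1)$ is exactly what forces the windows used by different $E_2$-arcs to be separated (or to be coherently nested), so that arc-disjointness and vertex-avoidance can be maintained simultaneously.

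The organization of the write-up will therefore be: (i) define the split $E=E_1\sqcup E_2$ and verify $\profile{T}{\pi,\sigma}=\profile{S}{\pi,\sigma}$ and $\cutvector{T}{\pi}=\cutvector{S}{\pi}$, so Lemma~\ref{lem:dom-tour} applies; (ii) record the structural consequences of $\codeword{S}{}\domleq\codeword{S'}{}$ that go beyond $\profile{}{}$, namely the preservation of the $\symmetric{}{}$-labels along $f$ and the control on index-gaps via the residue; (iii) for each $(v,u)\in E_2$, construct the routing path in $S'$ inside a bounded window around the images, using a symmetric arc of $S'$ whose existence is certified by (ii); (iv) argue the windows for distinct $E_2$-arcs can be chosen pairwise non-interfering (disjoint arcs, and not passing through branch vertices $\mu(w)$), using the residue spacing and, if needed, a greedy/exchange argument on the order in which $E_2$-arcs are processed; (v) conclude that $\mu$ extended to $E_1\cup E_2=E$ is a strong immersion model of $S$ in $S'$. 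The main obstacle is step (iii)--(iv): turning ``there is a symmetric arc in the right cut of $S'$'' into an actual short $\mu(v)\!\to\!\mu(u)$ path that is arc-disjoint from the Chudnovsky--Seymour paths \emph{and} from the other $E_2$-routings and avoids all other branch vertices — this is precisely the technical point where the tournament proof breaks, and it is where the $4c+1$ residue in the codeword must be spent to buy enough room. I expect that a careful local analysis of the cut structure (the ordered cuts $\varepsilon_{\pi'}$ being compatible across $\le 4c+1$ consecutive indices, as provided by the linkedness of $\sigma'$ together with $\cw(S',\pi')\le c$) gives each such path within a window of fewer than $4c+1$ vertices, and that assigning windows by the matched residue class makes the windows for distinct $E_2$-arcs disjoint.
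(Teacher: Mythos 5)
Your high-level plan---split $E=E_1\sqcup E_2$ so that $T=(V,E_1)$ is a tournament with the feedback arc of each symmetric pair kept in $E_1$, check $\profile{T}{\pi,\sigma}=\profile{S}{\pi,\sigma}$ and $\cutvector{T}{\pi}=\cutvector{S}{\pi}$, invoke Lemma~\ref{lem:dom-tour}, then route the arcs of $E_2$ afterwards---matches the paper's opening moves. But from that point your proposed routing mechanism diverges from the paper and, as sketched, does not work.

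The paper first splits $E_2$ into $F_1\sqcup F_2$ according to whether the embedding $f$ is \emph{tight} on the interval: for $(\pi_j,\pi_h)\in E_2$ with $j<h$, put it in $F_1$ when $f(h)-f(j)=h-j$ and in $F_2$ otherwise. You do not make this dichotomy, and it is essential. In the tight case $F_1$, the equality of codewords on the whole interval (this is where the $\symmetric{}{}$ component is actually used) forces $S[\pi_{\geq j}\cap\pi_{\leq h}]$ and $S'[\pi'_{\geq f(j)}\cap\pi'_{\leq f(h)}]$ to be isomorphic, so the single arc $(\pi'_{f(j)},\pi'_{f(h)})$ already exists in $S'$ and is free. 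In the loose case $F_2$, the congruences $j\equiv f(j)$ and $h\equiv f(h)\pmod{4c+1}$ force $f(h)-f(j)\geq h-j+(4c+1)$: the residue gives a \emph{lower} bound on the stretch, the opposite of what you claim when you say ``the gap \ldots is at most $4c+1$ indices.'' Those $\geq 4c+1$ extra positions are vertices of $S'$ not in $\mu(V)$; after discarding at most $2c$ of them incident to feedback arcs at $f(j)$ or $f(h)$, semi-completeness of $S'$ gives at least $2c+1$ ``free pivots'' $\pi'_i$ with $(\pi'_{f(j)},\pi'_i),(\pi'_i,\pi'_{f(h)})\in E'$, and the image of $(\pi_j,\pi_h)$ is simply a length-two \emph{forward} path through one such pivot. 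No symmetric arc of $S'$ is ever traversed; your plan to ``locate a symmetric arc in $E_{\pi'}^h$ and follow its reverse, a feedback arc, locally'' would send you backwards in $\pi'$ and would collide with the Chudnovsky--Seymour paths, which are exactly the ones made of feedback arcs.

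Finally, the arc-disjointness across $E_2$-arcs in the paper is not handled via ``non-interfering windows by residue class''; it is a short greedy count: for a given $(\pi_j,\pi_h)\in F_2$, at most $c$ of the $\geq 2c+1$ pivot-arcs out of $\pi'_{f(h)}$ (resp.\ into $\pi'_{f(j)}$) can already have been consumed, because each consumed one corresponds to a distinct symmetric pair of $S$ incident to $\pi_h$ (resp.\ $\pi_j$), and cutwidth $\leq c$ bounds how many such pairs exist. So at least one pivot remains with both arcs unused. You gesture at ``a greedy/exchange argument'' but the counting reason it succeeds---and why $2c+1$ pivots suffice---is the substance you would need to supply.
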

\begin{proof} 
We adopt the notation from the definition of domination: we have an embedding $f\colon [n]\to [n']$ such that for all $j \in [n]$ we have $\codeword{S}{\pi,\sigma}(j) = \codeword{S'}{\pi',\sigma'}(f(j))$, and
for all $j \in [n-1]$ and all $i \in [f(j),f(j+1)-1]$ we have $|E_{\pi}^j| \leq |{E'}_{\pi'}^{i}|$.

We partition $E$ into $E_1\uplus E_2$ so that $T=(V,E_1)$ is a tournament:
\begin{itemize}
\item if $(u,v) \in E$ but $(v,u) \not \in E$, then $(u,v) \in E_1$;
\item otherwise if $\pi(u)<\pi(v)$, we let $(v,u)\in E_1$ and $(u,v)\in E_2$.
\end{itemize}
Observe that since only the feedback arcs in $\pi$ contribute in the definitions of $\profile{S}{\pi,\sigma}$ and $\cutvector{S}{\pi}$, we have
$$(n,\profile{T}{\pi,\sigma},\cutvector{T}{\pi})=(n,\profile{S}{\pi,\sigma},\cutvector{S}{\pi}).$$
Moreover, from the assumed domination we have
$$(n,\profile{S}{\pi,\sigma},\cutvector{S}{\pi})\domleq (n',\profile{S'}{\pi',\sigma'},\cutvector{S'}{\pi'}).$$
Hence we may apply Lemma~\ref{lem:dom-tour} to get a strong immersion model $\mu$ of $T$ in $S'$ such that $\mu(\pi_j)=\pi'_{f(j)}$ for all $j\in [n]$.

Our goal is to extend $\mu$ to a strong immersion of $S$ in $S'$. 
It remains to show how for each arc of $E_2$ we can construct a corresponding directed path in $S'$ so that these paths are pairwise arc-disjoint, and also arc-disjoint with the paths used in $\mu$.
Let us partition $E_2$ into $F_1\uplus F_2$, as follows.
Take any arc $(u,v)\in E_2$, , 
and let us denote $\pi(u)=j$ and $\pi(v)=h$; recall that $j<h$. 
Recall also that since $(u,v)\in E_2$, we have $(v,u)\in E_1$.
Since the embedding $f$ is strictly increasing, we have $f(h)-f(j) \geq h-j$. 
Put $(u,v)$ into $F_1$ if $f(h)-f(j) = h-j$ and into $F_2$ if $f(h)-f(j) > h-j$.

We first construct the images for the arcs of $F_1$.
Take any $(\pi_j,\pi_h)\in F_1$.
Observe that since $f$ is strictly increasing, in fact for every $\ell \in [0,h-j]$, we have $f(j+\ell) = f(j) + \ell$. 
By domination, we have that $\codeword{S}{\pi,\sigma}(j+\ell)=\codeword{S'}{\pi',\sigma'}(f(j+\ell))=\codeword{S'}{\pi',\sigma'}(f(j)+\ell)$ for all $\ell\in [0,h-j]$.
Since the codewords contain the full information on which feedback arcs in consecutive cuts are equal, and which feedback arcs have corresponding symmetric arcs,
it can be easily seen that these equalities of the labels imply that the semi-complete digraphs $S[\pi_{\geq j} \cap \pi_{\leq h}]$ and $S'[\pi'_{\geq f(j)} \cap \pi'_{\leq f(h)}]$ are isomorphic,
with the isomorphism mapping $\pi_{j+\ell}$ to $\pi'_{f(j)+\ell}$.
In particular, we have $(\mu(\pi_j),\mu(\pi_h)) = (\pi'_{f(j)}, \pi'_{f(h)}) \in E'$. 
Observe that by the definition of a strong immersion, if $(\mu(\pi_j),\mu(\pi_h))$ belonged to some path $\mu(e)$ for an arc $e\in E_1$, then we would necessarily have $e=(\pi_j,\pi_h)$, 
however $(\pi_j,\pi_h)\in F_1$. 
Therefore the arc $(\mu(\pi_j),\mu(\pi_h))$ is free and we can set $\mu((\pi_j,\pi_h))$ to be the length-$1$ path consisting only of the arc $(\mu(\pi_j),\mu(\pi_h))$.
Observe that this preserves the invariant asserted by Lemma~\ref{lem:dom-tour} that all non-feedback arcs in $\pi$ are mapped to single-arc paths in $S'$.

We are left constructing the images for arcs of $F_2$.
Take any $(\pi_j,\pi_h)\in F_2$.
Since $j \equiv f(j) \bmod (4c+1)$ and $h \equiv f(h) \bmod (4c+1)$ by domination, we infer that $f(h)-f(j) \equiv h-j \bmod (4c+1)$. 
Therefore, $f(h)-f(j) \geq h-j+(4c+1)$. This means that there are at least $4c+1$ vertices in $\pi'_{\geq f(j)} \cap \pi'_{\leq f(h)}$ that are not images under $\mu$ of any vertex of $S$. 
Among these $4c+1$ vertices, at most $c$ can be the tails of feedback arcs with $\pi'_{f(j)}$ as the head, since each such arc contributes to $|{E'}_{\pi'}^{f(j)}|$, which is at most $c$. 
Similarly, at most $c$ of these vertices can be the heads of feedback arcs with $\pi'_{f(h)}$ as the tail. 
Since $S'$ is semi-complete, this leaves us with at least $2c+1$ indices $i$ with the following properties 
\begin{enumerate}
\item $f(j)<i<f(h)$; 
\item $\pi'_i \notin \mu(V)$; and
\item $(\pi'_{f(j)},\pi'_i)\in E'$ and $(\pi'_i,\pi'_{f(h)})\in E'$.
\end{enumerate}
Each vertex $\pi'_i$ located at such position $i$ will be called a {\em{free pivot}} for the arc $(\pi_j,\pi_h)\in F_2$.

We now verify that if $\pi'_i$ is a free pivot for $(\pi_j,\pi_h)\in F_2$, then none of the arcs $(\pi'_{f(j)},\pi'_i)$ and $(\pi'_i,\pi'_{f(h)})$ belongs to the path $\mu(e)$ for any arc $e\in E_1\cup F_1$. 
This is because:
\begin{itemize}
\item If $e$ is a feedback arc in $\pi$, then $e\in E_1$.
By the definition of strong immersion, the path $\mu(e)$ uses only at most two arcs incident to vertices of $\mu(V)$: the first and the last one on $\mu(e)$.
By the last assertion of Lemma~\ref{lem:dom-tour}, both of them are feedback arcs in $\pi'$. However, both arcs $(\pi'_{f(j)},\pi'_i)$ and $(\pi'_i,\pi'_{f(h)})$ are not feedback arcs in $\pi'$ and they
are incident to vertices of $\mu(V)$. Hence they cannot be used on $\mu(e)$.
\item Otherwise, if $e\in E_1\cup F_1$ is a non-feedback arc, then by construction $\mu(e)$ consists of a single arc connecting two vertices from $\mu(V)$. However we have $\pi'_i\notin \mu(V)$,
hence arcs $(\pi'_{f(j)},\pi'_i)$ and $(\pi'_i,\pi'_{f(h)})$ cannot be used by $\mu(e)$.
\end{itemize}
Thus, for any vertex $\pi'_i$ that is a free pivot for $(\pi_j,\pi_h)$, the path of length $2$ formed by the arcs $(\pi'_{f(j)},\pi'_i)$ and $(\pi'_i,\pi'_{f(h)})$ may be used to define $\mu(u,v)$, 
because none of its arcs has been used so far for images of arcs from $E_1\cup F_1$. Nevertheless, we have to argue that such $2$-paths can be selected so that they are pairwise arc-disjoint. 
This will conclude the construction of a strong immersion model of $S$ in $S'$.

We prove that greedily selecting $2$-paths via free pivots is safe. Iteratively consider the arcs of $F_2$, and let $e=(\pi_j,\pi_h)\in F_2$ be the next one.
Recall that since $(\pi_j,\pi_h)\in F_2\subseteq E_2$,
 we have also $(\pi_h,\pi_j)\in E$.
Observe that out of the at least $2c+1$ arcs with tail $\pi'_{f(h)}$ and head being a free pivot for $e$, only at most $c$ could have been used so far in the greedy procedure, 
since each such arc used so far corresponds to a different pair of symmetric arcs in $S$ incident to $\pi_h$; the number of such arc pairs is bounded by $c$ due to the bound on the cutwidth of $\pi$.
Symmetrically, at most $c$ arcs with head $\pi'_{f(j)}$ and tail in a free pivot for $e$ could have been used so far.
This leaves us with at least one free pivot $\pi'_i$ with both arcs $(\pi'_{f(j)},\pi'_i)$ and $(\pi'_i,\pi'_{f(h)})$ unused so far, so we can define $\mu(e)$ as the $2$-path formed 
by this pair of arcs.
\end{proof}

We conclude by formally verifying that Lemma~\ref{lem:wqo} follows from Lemma~\ref{lem:dom-semi}.

\begin{proof}[Proof of Lemma~\ref{lem:wqo}]
Let $S_1,S_2,S_3,\ldots$ be a sequence of semi-complete digraphs, each of cutwidth at most $c$.
By Lemmas~\ref{lem:lvo} and~\ref{lem:lco}, for each $t=1,2,\ldots$ we can fix a linked vertex ordering $\pi_t$ of $S_t$, and a linked sequence of ordered cuts $\sigma_t$ for $\pi_t$.
For each $t$, consider the codeword $C_t=(n_t,\codeword{S_t}{\pi_t,\sigma_t},\cutvector{S}{\pi})$, where $n_t$ is the number of vertices of $S_t$.
Since the domination order is a well-quasi-ordering of codewords, we infer that there are indices $1\leq t<t'$ such that $C_t\domleq C_{t'}$.
By Lemma~\ref{lem:dom-semi}, $S_t$ can be strongly immersed in $S_{t'}$, which concludes the proof.
\end{proof}


\bibliographystyle{abbrv}

\end{document}